\DeclarePairedDelimiter\ceil{\lceil}{\rceil}
\DeclarePairedDelimiter\floor{\lfloor}{\rfloor}
\theoremstyle{plain}
\newtheorem{theorem}{Theorem}
\newtheorem{proposition}{Proposition}
\newtheorem{lemma}{Lemma}
\newtheorem{corollary}{Corollary}
\theoremstyle{definition}
\newtheorem{definition}{Definition}
\newtheorem{example}{Example}
\DeclareMathAlphabet{\mathbfsl}{OT1}{ppl}{b}{it} 
\newcommand{\bp}{{\mathbfsl p}}
\newcommand{\bq}{{\mathbfsl q}}
\newcommand{\bu}{{\mathbfsl u}}
\newcommand{\bv}{{\mathbfsl v}}
\newcommand{\by}{{\mathbfsl y}}
\newcommand{\bc}{{\mathbfsl c}}
\newcommand{\bw}{{\mathbfsl{w}}}
\newcommand{\bx}{{\mathbfsl{x}}}
\newcommand{\bz}{{\mathbfsl{z}}}
\newcommand{\bbZ}{{\mathbb Z}}
\newcommand{\ppmod}[1]{~({\rm mod~}#1)}
\renewcommand{\ge}{\geqslant}
\renewcommand{\le}{\leqslant}
\newcommand{\et}{{\emph{et al.}}}
\newcommand{\enc}{\textsc{Enc}}
\newcommand{\dec}{\textsc{Dec}}
\begin{document}

\pagestyle{empty}

\title{Efficient Design of Subblock Energy-Constrained Codes and Sliding Window-Constrained Codes}
\author{
   \IEEEauthorblockN{
   	Tuan Thanh Nguyen,
	Kui Cai, 
	and Kees A. Schouhamer Immink}
\thanks{Tuan Thanh Nguyen and Kui Cai are with the Singapore University of Technology and Design, Singapore 487372 (email: \{tuanthanh\_nguyen, cai\_kui\}@sutd.edu.sg).}
\thanks{Kees A. Schouhamer Immink is with the Turing Machines Inc, Willemskade 15d, 3016 DK Rotterdam, The Netherlands (email: immink@turing-machines.com).}
\thanks{This paper was presented in part at the 2020 Proceedings of the IEEE International Symposium on Information Theory \cite{binary2020}.}
}
\maketitle

\hspace{-3mm}\begin{abstract}
The subblock energy-constrained codes (SECCs) and sliding window-constrained codes (SWCCs) have recently attracted attention due to various applications in communication systems such as simultaneous energy and information transfer. In a SECC, each codeword is divided into smaller non-overlapping windows, called subblocks, and every subblock is constrained to carry sufficient energy. In a SWCC, however, the energy constraint is enforced over every window. In this work, we focus on the binary channel, where sufficient energy is achieved theoretically by using relatively high weight codes, and study SECCs and SWCCs under more general constraints, namely bounded SECCs and bounded SWCCs. We propose two methods to construct such codes with low redundancy and linear-time complexity, based on Knuth's balancing technique and sequence replacement technique. These methods can be further extended to construct SECCs and SWCCs. For certain codes parameters, our methods incur only one redundant bit. We also impose the minimum distance constraint for error correction capability of the designed codes, which helps to reduce the error propagation during decoding as well.


\end{abstract}


\section{Introduction}
Constrained coding has been used widely in various communication and storage systems. 
For example, to avoid detection errors due to inter-symbol interference and synchronization errors in magnetic and optical storage, {\em runlength-limited codes} (RLLs) are employed to restrict any run of zeros between consecutive ones \cite{kas:1990,book}.  
Recently, the {\em subblock energy-constrained codes} (SECCs) and {\em sliding window-constrained codes} (SWCCs) have been shown as suitable candidates for providing simultaneous energy and information transfer from a powered transmitter to an energy-harvesting receiver \cite{vars:2016,vars:2008,Zhao:2016,skip:2018,chee:2014,hm2017,kiah:2019,kiah:2018,bar2011,fou2014}. In this scenario, the receiver uses the same received signal both for decoding information and for harvesting energy which is to power the receiver's circuitry. In 2008, Varshney \cite{vars:2008} characterized the tradeoff between reliable communication and delivery of energy at the receiver by using a general capacity-power function, where transmitted sequences were constrained to contain sufficient energy. In this work, we focus on the binary channel, where {\em on-off keying} is employed, and bit 1 (bit 0) denotes the presence (absence) of a high energy signal. As such, sufficient energy is achieved theoretically by using relatively high weight codes. 

Recently, Tandon \et{} \cite{vars:2016} demonstrated that imposing only an energy constraint over the whole transmitted sequence might not be sufficient. It is important to avoid sequences which carry limited energy over long duration, and consequently, preventing energy outage at a receiver having finite energy storage capability. In order to regularize the energy content in the signal, two classes of energy-constrained codes, namely SECCs and SWCCs, were suggested \cite{vars:2016,skip:2018}. Formally, in a binary SECC, each codeword is divided into smaller non-overlapping window, called {\em subblocks}, and every subblock is constrained to have sufficient number of ones. In contrast, a binary SWCC restricts the number of ones over every window of consecutive symbols (see Figure~\ref{fig:comparison}). This approach has been investigated in \cite{bar2011,fou2014,immink2020}. SWCCs have been further studied for other applications of error-correction codes in \cite{gabrys2018,tt2019}. In fact, the subblock energy constraint is weaker than the latter, and even if every subblock in a codeword $\bc$ carries sufficient energy, there might still be a subsequence in $\bc$ that carries limited energy over long duration (see Example~\ref{example1}). In contrast, the sliding-window constraint enables all codewords to carry sufficient energy over every duration, which meets real-time delivery requirements, but also reduces the number of valid codewords and therefore the information capacity. In this work, we provide some bounds for SWCCs and show that if the length of each duration satisfies certain constraints, there exist codes whose rates approach capacity. In such cases, we design an efficient method to construct SWCCs with only one redundant bit.

Furthermore, we study SECCs and SWCCs under more general constraints, namely {\em bounded SECCs} and {\em bounded SWCCs}. The additional constraint restricts the energy in every subblock in SECCs (or every window in SWCCs) to be below a given threshold, consequently preventing energy outage at a receiver having finite energy storage capability (see Figure~\ref{fig:comparison}). Throughout this paper, we propose two methods for constructing bounded SECCs and bounded SWCCs, based on {\em Knuth's balancing technique} and {\em sequence replacement technique}. The methods can be extended to construct SECCs and SWCCs as well. We further combine these codes efficiently with {\em error correction codes} (ECCs), which also helps to reduce error propagation of the designed codes during decoding. Before we present the main results of the paper, we go through certain notations and then highlight the major contributions of this work.


\subsection{Notations}\label{sec:prelim}
Given two binary sequences $\bx=x_1\ldots x_m$ and $\by=y_1\ldots y_n$, the {\em concatenation} of the two sequences is defined by $$\bx \by \triangleq x_1\ldots x_m y_1\ldots y_n.$$ For a binary sequence $\bx$, we use ${\rm wt}(\bx)$ to denote the weight of $\bx$, i.e the number of ones in $\bx$. We use $\overline{\bx}$ to denote the complement of $\bx$. For example, if $\bx=00111$ then ${\rm wt}(\bx)=3$ and $\overline{\bx}=11000$.

Throughout this work, we denote the codeword length by $n$, the subblock (or window) length by $\ell$ where $\ell\le n$. In SECCs, we also require $n=m\ell$ for some positive integer $m$. 

\begin{definition} For $0\le a\le \ell$, we use ${\cal S}(n,\ell,a)$ to denote the set of all codewords with length $n = m\ell$, and the weight in each subblock is at least $a$, and we use ${\cal W}(n,\ell,a)$ to denote the set of all codewords with length $n$ (not necessary a multiple of $\ell$), and the weight in every window of size $\ell$ is at least $a$. 
\end{definition}

\begin{definition} For $0\le a< b \le \ell$, we use ${\cal S}(n,\ell,[a,b])$ to denote the set of all codewords with length $n = m\ell$, and the weight in each subblock is at least $a$ and at most $b$. Similarly, ${\cal W}(n,\ell,[a,b])$ denotes the set of all codewords with length $n$ and the weight in every window of size $\ell$ is at least $a$ and at most $b$. 
\end{definition}

\begin{proposition}
For all $0\le a< b \le \ell$, we have
\begin{enumerate}[(i)]
\item ${\cal W}(n,\ell,a)\subset {\cal S}(n,\ell,a)$, ${\cal W}(n,\ell,[a,b]) \subset {\cal S}(n,\ell,[a,b])$,
\item ${\cal W}(n,\ell,a) \equiv {\cal W}(n,\ell,[a,\ell])$, ${\cal S}(n,\ell,a)\equiv{\cal S}(n,\ell,[a,\ell])$.
\end{enumerate}
\end{proposition}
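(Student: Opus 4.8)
The plan is to verify each of the four set inclusions/equalities directly from the definitions, since this is essentially a bookkeeping exercise comparing the ``every window'' condition with the ``every subblock'' condition.

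For part (i), first I would observe that when $n = m\ell$, the $m$ subblocks of a codeword are precisely the windows starting at positions $1, \ell+1, 2\ell+1, \ldots, (m-1)\ell+1$; that is, the collection of subblocks is a sub-collection of the collection of all length-$\ell$ windows. Hence if $\bc \in {\cal W}(n,\ell,a)$, then every window of size $\ell$ has weight at least $a$, so in particular every subblock does, giving $\bc \in {\cal S}(n,\ell,a)$. The same argument with the two-sided bound $[a,b]$ applied simultaneously to the lower and upper weight constraints yields ${\cal W}(n,\ell,[a,b]) \subseteq {\cal S}(n,\ell,[a,b])$. To get strict inclusion $\subset$ rather than $\subseteq$, I would exhibit a short witness: for instance, a codeword that places all its ones at positions $1, \ell+1, 2\ell+1, \ldots$ so that each subblock has weight exactly meeting $a$ but some window straddling a subblock boundary has weight below $a$ (assuming $n > \ell$, i.e. $m \ge 2$, and $a \ge 1$; the boundary cases where the sets coincide should be noted but are degenerate). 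This step — producing the separating example cleanly while respecting the parameter ranges — is the only part requiring a little care.

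For part (ii), the equivalences ${\cal W}(n,\ell,a) \equiv {\cal W}(n,\ell,[a,\ell])$ and ${\cal S}(n,\ell,a) \equiv {\cal S}(n,\ell,[a,\ell])$ are immediate: the weight of any length-$\ell$ binary block is trivially at most $\ell$, so appending the upper bound ``$\le \ell$'' to a weight constraint imposes nothing new. I would simply remark that ${\rm wt}(\bx) \le \ell$ holds for every $\bx \in \{0,1\}^\ell$, hence the ``at most $\ell$'' clause in ${\cal S}(n,\ell,[a,\ell])$ (resp.\ ${\cal W}(n,\ell,[a,\ell])$) is vacuous and the defining conditions of the two sets coincide verbatim.

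Overall there is no real obstacle here; the argument is a direct unpacking of Definitions~1 and~2. The only point worth stating explicitly is the structural observation underlying (i): subblocks are a special subset of windows when $n$ is a multiple of $\ell$, which is exactly why the sliding-window constraint is the stronger of the two. If one wanted to be thorough about the strictness claim, one should record that it can fail only in trivial parameter regimes (e.g. $m=1$, where subblocks and the single window coincide), and otherwise the straddling-window example settles it.
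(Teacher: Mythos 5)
Your argument is correct and is exactly the definition-unpacking the paper has in mind (the paper states this proposition without proof, as immediate): subblocks are the windows starting at positions $1,\ell+1,\ldots,(m-1)\ell+1$, so the sliding-window constraint implies the subblock constraint, and the upper bound $\ell$ on the weight of a length-$\ell$ block is vacuous, giving (ii). The one concrete slip is your proposed witness for strictness: if the ones are placed exactly at positions $1,\ell+1,2\ell+1,\ldots$, then \emph{every} length-$\ell$ window contains exactly one position congruent to $1 \pmod{\ell}$ and hence has weight exactly $1$, so this word does not violate the window constraint (and its subblocks have weight $1$, not $a$, unless $a=1$). A working separating example for $1\le a<\ell$ and $m\ge 2$ is to put $a$ ones at the left end of the first subblock and $a$ ones at the right end of the second subblock, with zeros elsewhere in those two subblocks: each subblock has weight $a$, while the window starting at position $a+1$ has weight $\max(0,2a-\ell)<a$. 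You are right, though, that strictness fails in degenerate regimes (e.g.\ $m=1$, or $a=0$ with $b=\ell$), so the paper's ``$\subset$'' is best read as ``$\subseteq$''; with that reading your proof needs no witness at all.
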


Given $0<\ell\le n$, $0\le a<b\le \ell,$ the capacity of those constraint channels are defined by: 
\begin{align*}
{\bf c}_{\cal S}{(\ell,a)} &\triangleq \lim_{n \to \infty} 1/n \log |{\cal S}(n,\ell,a)|,       \\
{\bf c}_{\cal S}{(\ell,[a,b])} &\triangleq \lim_{n \to \infty} 1/n \log |{\cal S}(n,\ell,[a,b])|, \\
{\bf c}_{\cal W}{(\ell,a)} &\triangleq \lim_{n \to \infty} 1/n \log |{\cal W}(n,\ell,a)|,  \\
{\bf c}_{\cal W}{(\ell,[a,b])} &\triangleq \lim_{n \to \infty} 1/n \log |{\cal W}(n,\ell,[a,b])|. 
\end{align*}

The capacity ${\bf c}_{\cal W}{(\ell,a)}$ is studied and determined for certain values of $\ell$ and $a$ in our companion paper \cite{immink2020}. A special class of bounded SWCCs, namely {\em locally balanced constraints}, was introduced in \cite{ryan:2020} and the capacity ${\bf c}_{\cal W}{(\ell,[a,b])}$ was also studied when $a=\ell/2-\epsilon, b=\ell/2+\epsilon$ for $\epsilon > 0$. 
In general, to achieve high information capacity, the sufficient values for $a,b$ are $a\le p_1 \ell$ and $b\ge p_2\ell$ for some constants $0\le p_1 < 1/2< p_2\le 1$. In this work, not only are we interested in constructing large codes, we desire efficient encoders that map arbitrary binary messages into these codes.

\begin{definition}
For $0\le a\le \ell \le n$, and $0\le r\le n$, an encoder $\enc:\{0,1\}^{n-r} \to\{0,1\}^n$ is a $(n,\ell,a)$-{\em subblock energy-constrained encoder with $r$ bits of redundancy} 
if $\enc(\bx) \in {\cal S}(n,\ell,a)$ for all $\bx\in\{0,1\}^{n-r}$. The rate of the encoder is computed by $(n-r)/n=1-r/n$. For $0\le a<\ell/2<b\le \ell \le n$, the $(n,\ell,[a,b])$-{\em bounded subblock energy-constrained encoder}, $(n,\ell,a)$-{\em sliding window-constrained encoder}, and $(n,\ell,[a,b])$-{\em bounded sliding window-constrained encoder} are defined similarly. 
\end{definition}

For each constraint, our design objectives include low redundancy (equivalently, high information rate) and low complexity of the encoding/decoding algorithms. In Section II and Section III, for certain codes parameters, the rate of our encoders approaches the channel capacity.

\begin{figure*}[t!]
\begin{center}
\includegraphics[width=16cm]{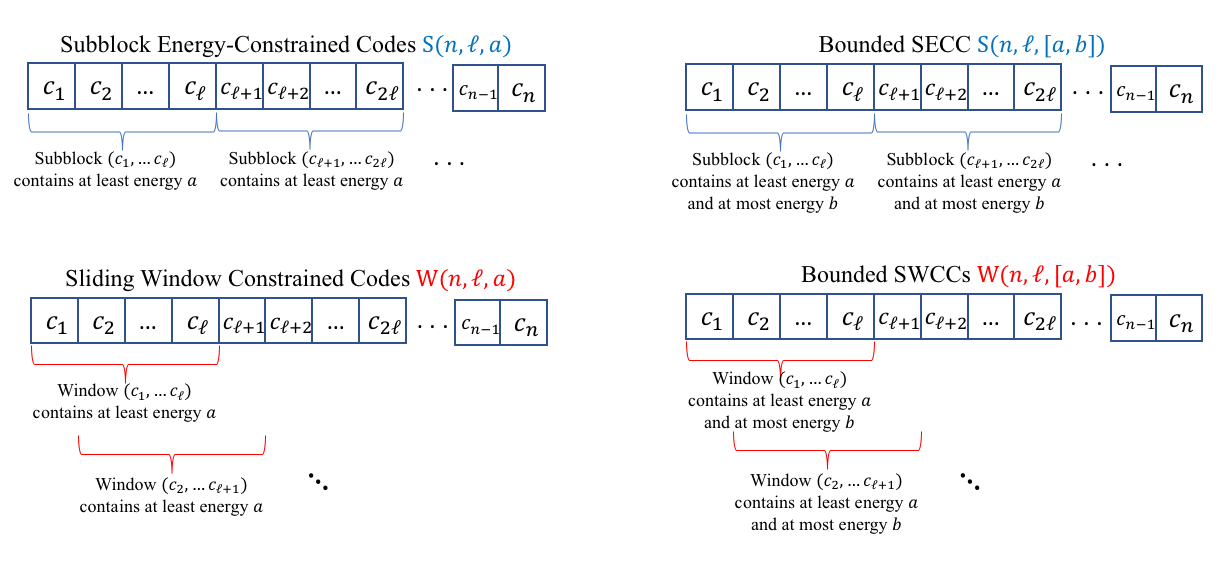}
\end{center}
\caption{SECCs, SWCCs, bounded SECCs, and bounded SWCCs.}
\label{fig:comparison}
\end{figure*}

\begin{definition}
For $n,\ell>0, n=m\ell$, a sequence $\bx=x_1x_2\ldots x_{n}\in\{0,1\}^{n}$ is divided into $m$ subblocks of size $\ell$ where the $i$th subblock is denoted by ${\bf B}_{(i,\ell)}(\bx)$, and ${\bf B}_{(i,\ell)}(\bx)=x_{(i-1)\ell+1}\ldots x_{i\ell}$ for $1\le i\le m$. On the other hand, the $i$th window of size $\ell$ of $\bx$, denoted by $\bw_{(i,\ell)}(\bx)$, is defined by $\bw_{(i,\ell)}(\bx)=x_{i} \ldots x_{i+\ell-1}$ for $1\le i\le n-\ell+1$.
\end{definition} 

\begin{example}\label{example1}
Let $n=18, \ell=6, m=3, a=2, b=5$. Consider a sequence $\bx= 001111 110000 011001$. The subblocks of $\bx$ are defined as follows. 
\begin{equation*}
\bx= \underbrace{001111}_{{\bf B}_{(1,6)}} \underbrace{110000}_{{\bf B}_{(2,6)}} \underbrace{011001}_{{\bf B}_{(3,6)}}.
\end{equation*}
We verify that the weight in each subblock is within $[2,5]$, and hence, $\bx \in {\cal S}(18,6,[2,5])$. However, $\bx\notin {\cal W}(18,6,[2,5])$, since there are windows of size six, for example ${\bw_{(3,6)}}$ and ${\bw_{(9,6)}}$, that violate the weight constraint, 
\begin{equation*}
\bx= 00 {\color{red}{\underbrace{111111}_{\bw_{(3,6)}}}} {\color{blue}{\underbrace{000001}_{\bw_{(9,6)}}}} 1001.
\end{equation*}
\end{example}

\subsection{Our Contributions}
In this work, we design efficient methods of mapping arbitrary users' data to codewords in SECCs, SWCCs, bounded SECCs, and bounded SWCCs. Formally, for $0\le a<b\le \ell$, 
\begin{enumerate}[(i)]
\item In Section II, we propose an efficient encoder for bounded SECCs ${\cal S}(n,\ell,[a,b])$ using the Knuth's balancing technique. Note that ${\cal S}(n,\ell,a) \equiv {\cal S}(n,\ell,[a,\ell])$, and hence, the method can be applied to construct efficient encoder for SECCs ${\cal S}(n,\ell,a)$ as well. Particularly, we extend the Knuth's balancing technique for balanced codes, i.e. $a=b=\ell/2$, to construct ${\cal S}(n,\ell,[a,b])$ for a special case when $a=p_1\ell, b=p_2\ell, 0\le p_1<1/2< p_2\le 1$, and then generalize this technique for arbitrary $0\le a< b\le \ell$. 

\item In Section III, we first study the size of SWCCs ${\cal W}(n,\ell,a)$ and bounded SWCCs ${\cal W}(n,\ell,[a,b])$. When $a=p_1\ell, b=p_2\ell,$ for some $0\le p_1<1/2< p_2\le 1$, we show that when the window size satisfies certain constraints, the code size is at least $2^{n-1}$. We then propose efficient encoders for ${\cal W}(n,\ell,a)$ and SWCCs ${\cal W}(n,\ell,[a,b])$ by using the sequence replacement technique. For certain values of $a,b,\ell$, our method incurs only one redundant bit. 

\item In Section IV, we study these codes with given error correction capability. Particularly, we construct codes that can correct multiple errors with the assumption that the distance between any two errors is at least $\ell$. The intuition behind this assumption is that when the energy constraint is enforced over every window of size $\ell$, the probability of having error is minimized over every window. We consider the worst case scenario when there is at most one error over every window of size $\ell$.
\end{enumerate}

\section{SECCs and Bounded SECCs}
In this section, we propose simple coding scheme to construct ${\cal S}(n,\ell,a)$ and ${\cal S}(n,\ell,[a,b])$. We are interested in the case where the number of subblocks is constant, i.e. $m=\Theta(1), \ell=\Theta(n)$. Particularly, we first modify the Knuth's balancing technique to construct ${\cal S}(n,\ell,[a,b])$ when there exist two constant numbers $p_1,p_2$, $0\le p_1< 1/2< p_2\le 1$ such that $a\le p_1 \ell, b\ge p_2\ell$. We then extend this method to construct ${\cal S}(n,\ell,[a,b])$ and ${\cal S}(n,\ell,a)$ for arbitrary $a,b$.   

\subsection{Maximum Information Rate}
The following result is immediate.
\begin{proposition}
For $n=m\ell, 0\le a<b\le \ell$, we have $|{\cal S}(n,\ell,[a,b])|=\left( \sum_{i=a}^{b} {\ell \choose i} \right)^m$ and $|{\cal S}(n,\ell,a)|=\left( \sum_{i=a}^{\ell} {\ell \choose i} \right)^m$. 
\end{proposition} 

In fact, we are able to show that the sizes of ${\cal S}(n,\ell,a)$ and ${\cal S}(n,\ell,[a,b])$, under certain conditions of $a,b,\ell$, are at least $2^{n-1}$, and therefore, the channel capacity in such cases is 1. 

\begin{theorem}\label{theorem1} For all $0\le p_1 <1/2 <p_2\le 1$ and $a\le p_1\ell, b\ge p_2\ell$, let $c=\min\{1/2-p_1, p_2-1/2\}$. For $n,\ell$ that $(1/c^2) \log_e n \le \ell \le n$, we have $|{\cal S}(n,\ell,[a, b])| \geq 2^{n-1}$.
\end{theorem}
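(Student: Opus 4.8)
The statement reduces to showing that a single subblock of length $\ell$ has weight in the range $[a,b]$ with probability close to $1$ when the bits are uniform i.i.d., and then using independence across the $m$ subblocks. Concretely, write $|{\cal S}(n,\ell,[a,b])| = \big(\sum_{i=a}^{b}\binom{\ell}{i}\big)^m = \big(2^\ell \cdot p\big)^m = 2^n p^m$, where $p = \Pr[\weight(\bU) \in [a,b]]$ for $\bU$ uniform on $\{0,1\}^\ell$. So it suffices to prove $p^m \ge 1/2$, i.e. $m \log_2(1/p) \le 1$. Since $a \le p_1\ell$ and $b \ge p_2\ell$ with $p_1 < 1/2 < p_2$, the ``bad'' event $\weight(\bU) \notin [a,b]$ is contained in the event that $\weight(\bU)$ deviates from its mean $\ell/2$ by at least $c\ell$, where $c = \min\{1/2-p_1,\, p_2-1/2\}$.

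\medskip

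\textbf{Key steps.} First, I would apply a Chernoff/Hoeffding bound to $\weight(\bU)$, a sum of $\ell$ i.i.d. Bernoulli$(1/2)$ variables: $\Pr[\,|\weight(\bU) - \ell/2| \ge c\ell\,] \le 2\exp(-2c^2\ell)$. Hence $1 - p \le 2e^{-2c^2\ell}$, and for $\ell$ large enough this is well below $1/2$, so in particular $\log(1/p) \le \log\frac{1}{1 - 2e^{-2c^2\ell}} \le 4e^{-2c^2\ell}$ using $\log\frac{1}{1-x} \le 2x$ for small $x$. Second, I would bound $m = n/\ell$ and combine: we need $m\log_2(1/p) \le 1$, i.e. roughly $\frac{n}{\ell} \cdot e^{-2c^2\ell} \le$ const. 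Since $n \le e^{c^2 \ell} \cdot e^{c^2\ell}$ — wait, more carefully: the hypothesis $\ell \ge (1/c^2)\log_e n$ gives $c^2\ell \ge \log_e n$, hence $e^{-c^2\ell} \le 1/n$, so $e^{-2c^2\ell} \le 1/n^2$, and then $m \cdot e^{-2c^2\ell} \le (n/\ell)/n^2 = 1/(n\ell) \le 1/\ell$, which is tiny. Feeding this back through the constants (the $2$ in the Chernoff bound, the $\log$-to-$x$ conversion, and $\log_2$ vs $\log_e$) comfortably yields $m\log_2(1/p) \le 1$, hence $p^m \ge 1/2$ and $|{\cal S}(n,\ell,[a,b])| = 2^n p^m \ge 2^{n-1}$.

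\medskip

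\textbf{Main obstacle.} The only real subtlety is bookkeeping the constants so that the clean hypothesis $\ell \ge (1/c^2)\log_e n$ exactly suffices — in particular making sure the $\frac{1}{2}$-deviation on each side is captured correctly by the single parameter $c$ (the event $\weight < a$ needs $a \le p_1\ell$ with $1/2 - p_1 \ge c$, and symmetrically for $b$), and choosing the Chernoff form so that the $\exp(-2c^2\ell)$ rate survives the union bound over the two tails and the product over $m$ subblocks. None of this is deep; it is a matter of being slightly generous with the estimate $\log\frac{1}{1-x}\le 2x$ and noting $m/n^2 \le 1$. A secondary point worth a line: we should check the edge cases where $a = 0$ or $b = \ell$ (then one tail is vacuous and the bound only improves), and that $c > 0$ strictly, which follows from $p_1 < 1/2 < p_2$.
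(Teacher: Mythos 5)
Your proposal is correct, but it takes a different route from the paper. The paper does not prove Theorem~1 directly: it defers to Theorem~\ref{bound}, which establishes the stronger statement $|{\cal W}(n,\ell,[a,b])|\ge 2^{n-1}$ for the sliding-window code by drawing $\bx$ uniformly from $\{0,1\}^n$, applying Hoeffding to a single window to get the bound $2e^{-2c^2\ell}\le 2/n^2$, and then union-bounding over all $n-\ell+1$ (overlapping) windows to conclude $P(\bx\notin{\cal W})\le 2/n$; Theorem~1 then follows from the inclusion ${\cal W}(n,\ell,[a,b])\subset{\cal S}(n,\ell,[a,b])$. You instead exploit the product structure specific to the subblock constraint, writing $|{\cal S}(n,\ell,[a,b])|=\bigl(\sum_{i=a}^{b}\binom{\ell}{i}\bigr)^m=2^n p^m$ with $p$ the per-subblock success probability, bounding $1-p\le 2e^{-2c^2\ell}\le 2/n^2$ by the same Hoeffding estimate, and checking $p^m\ge 1/2$. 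Both arguments hinge on the identical concentration step (including the reduction of the two tails to the single parameter $c$), so the probabilistic core is the same; the difference is that your version is self-contained and avoids any union bound over overlapping windows (independence of the $m$ disjoint subblocks does the work), whereas the paper's version buys the strictly stronger sliding-window bound, which it needs anyway in Section~III. Your constant bookkeeping goes through: $m\le n$, $e^{-2c^2\ell}\le 1/n^2$, and $\log\frac{1}{1-x}\le 2x$ for $x\le 1/2$ give $m\log_2(1/p)=O(1/n)\le 1$ once $n$ is modestly large, and the hypotheses $\ell\le n$, $\ell\ge(1/c^2)\log_e n$ with $c\le 1/2$ already force such an $n$ (the paper likewise has to assume $n\ge 4$). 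One could even shortcut your logarithm step via Bernoulli's inequality, $p^m\ge 1-2m/n^2\ge 1-2/(n\ell)\ge 1/2$.
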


We defer the proof of Theorem~\ref{theorem1} to Section III, Theorem~\ref{bound}. In fact, Theorem~\ref{bound} presents a stronger result that under the assumption of $n,\ell,a,b,$ as mentioned in Theorem~\ref{theorem1}, we have $|{\cal W}(n,\ell,[a, b])| \geq 2^{n-1}$. Since ${\cal W}(n,\ell,[a, b]) \subset {\cal S}(n,\ell,[a, b])$, Theorem~\ref{theorem1} is then proved.


\subsection{Efficient Construction of ${\cal S}(n,\ell,[a,b])$}
In this section, we modify the Knuth's balancing technique to construct ${\cal S}(n,\ell,[a,b])$. Knuth's balancing technique is a linear-time algorithm that maps a binary message $\bx$ to a balanced binary word $\by$ of the same length by flipping the first $t$ bits of $\bx$ \cite{knuth}. The crucial observation demonstrated by Knuth is that such an index $t$ always exists and $t$ is commonly referred to as a {\em balancing index}. To represent such a balancing index, Knuth appends $\by$ with a short balanced suffix $\bp$ of length $\log n$ and hence, a lookup table of size $n$ is required. Modifications of the generic scheme are discussed in \cite{1988,imbalanced,bose1996,kas:2020}.

\begin{definition}
For a binary sequence $\bx \in \{0,1\}^n$ and $0\le t\le n$, let ${\bf f}_t(\bx)$ denote the binary sequence obtained by flipping the first $t$ bits of $\bx$.
\end{definition}

\begin{example}
Let $\bx=001111 \in \{0,1\}^{6}$. We have ${\bf f}_1(\bx)=101111, {\bf f}_2(\bx)=111111, {\bf f}_3(\bx)=110111, {\bf f}_4(\bx)=110011, {\bf f}_5(\bx)=110001,$ and ${\bf f}_6(\bx)=110000$. Hence, $t=5$ is the unique balancing index of $\bx$. In general, the balancing index may not be unique. For example, consider $\by=001100$. We observe that both ${\bf f}_1(\by)=101100$ and ${\bf f}_5(\by)=110010$ are balanced, therefore, both $t=1$ and $t=5$ are balancing indices of $\by$. 
\end{example}

We now extend the Knuth's method to construct ${\cal S}(n,\ell,[a,b])$ when $a\le p_1 \ell, b\ge p_2\ell$ for some constant $p_1,p_2$, $0\le p_1< 1/2< p_2\le 1$. Our main idea is that, for a codeword $\bx$ and the $i$th subblock ${\bf B}_{(i,\ell)}(\bx)$, there exists an index $t$ such that after flipping the first $t$ bits in ${\bf B}_{(i,\ell)}(\bx)$, the weight of the subblock is within $[p_1\ell,p_2\ell]$. We now show that such an index t always exists and there is an efficient method to find $t$. 

\begin{definition}\label{walk}
Let $n$ be even and set $[n]=\{0,1,2,\ldots,n\} $. For arbitrary $0<k\le n$, a {\em walk} of size $k$ in $[n]$ is the set of indices ${\rm S}_{(k, n)}\triangleq \{0,n\} \cup \{ik: i\ge 1 \mbox{ and } ik < n \}$.
\end{definition}


\begin{theorem}\label{epsilon-balanced}
Given $\ell$ even and $0\le p_1 <1/2 <p_2\le 1$. Let $k=(p_2-p_1)\ell$. For an arbitrary binary sequence $\bx \in \{0,1\}^{\ell}$, there exists an index $t$ in the set ${\rm S}_{(k,\ell)}$, such that the weight of ${\bf f}_t(\bx)$ is within $[p_1\ell,p_2\ell]$.
\end{theorem}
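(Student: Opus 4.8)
The plan is to use a discrete intermediate value argument along the walk ${\rm S}_{(k,\ell)}$. Define $g(t) = {\rm wt}({\bf f}_t(\bx))$ for $t \in \{0,1,\ldots,\ell\}$. The two endpoints are related by complementation: ${\bf f}_0(\bx) = \bx$ and ${\bf f}_\ell(\bx) = \overline{\bx}$, so $g(0) + g(\ell) = \ell$. Hence one of $g(0), g(\ell)$ is $\le \ell/2$ and the other is $\ge \ell/2$; without loss of generality say $g(0) \le \ell/2 \le g(\ell)$ (the other case is symmetric, flipping roles of the two ends). The key local property is that flipping one more bit changes the weight by exactly $\pm 1$, so along the full range $t = 0,1,\ldots,\ell$ the function $g$ is a ``unit-step walk''. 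We want to restrict attention to the sparser grid ${\rm S}_{(k,\ell)} = \{0,\ell\} \cup \{ik : i \ge 1, ik < \ell\}$, whose consecutive points are $k$ apart (with a possibly shorter final gap up to $\ell$). Between two consecutive grid points the weight changes by at most the gap length, which is at most $k = (p_2 - p_1)\ell$.

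Now I would run the intermediate-value step on the grid. Since $g(0) \le \ell/2 < p_2 \ell$ and $g(\ell) \ge \ell/2 > p_1\ell$, and since the goal interval $[p_1\ell, p_2\ell]$ has length $(p_2-p_1)\ell = k$, I claim there must be a grid point $t \in {\rm S}_{(k,\ell)}$ with $g(t) \in [p_1\ell, p_2\ell]$. Suppose not: then every grid point has $g(t) < p_1\ell$ or $g(t) > p_2\ell$. Since $g(0) \le \ell/2$, if $g(0)$ were outside the interval it would satisfy $g(0) < p_1\ell$; likewise $g(\ell) > p_2\ell$ if it is outside. So walking along the grid from $0$ to $\ell$, the value starts in the region $(-\infty, p_1\ell)$ and ends in $(p_2\ell, +\infty)$, and must at some consecutive pair of grid points $t_j < t_{j+1}$ jump from the low region to the high region, i.e. $g(t_j) < p_1\ell$ and $g(t_{j+1}) > p_2\ell$. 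But then $g(t_{j+1}) - g(t_j) > (p_2 - p_1)\ell = k \ge t_{j+1} - t_j$, contradicting the fact that $g$ changes by at most $1$ per unit step (hence by at most $t_{j+1}-t_j$ over that gap). This contradiction gives the desired index $t$.

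The main obstacle — really the only subtlety — is bookkeeping at the ends of the grid: the first gap is from $0$ to $k$, the interior gaps are all exactly $k$, but the last gap (from the largest multiple of $k$ below $\ell$ up to $\ell$) can be anywhere from $1$ to $k$ in length, and one must make sure the ``step at most gap-length'' bound still applies there (it does, trivially, since a unit-step walk over $s$ steps moves by at most $s \le k$). One should also double-check the boundary inequalities are strict in the right direction: we need $p_1\ell < \ell/2 < p_2\ell$, which holds because $p_1 < 1/2 < p_2$, so the endpoint values $g(0), g(\ell)$ that lie on the ``wrong'' side of $\ell/2$ are genuinely outside $[p_1\ell, p_2\ell]$ on the correct side, making the low-to-high jump argument go through. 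With $\ell$ even, $\ell/2$ is an integer and all the threshold comparisons are clean. Everything else is the standard discrete IVT packaging.
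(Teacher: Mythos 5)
Your proof is correct and follows essentially the same route as the paper's: flipping prefixes along the walk ${\rm S}_{(k,\ell)}$ changes the weight by at most $k=(p_2-p_1)\ell$ between consecutive grid points, so the discrete intermediate-value argument between the two endpoints (which straddle $\ell/2$) forces some grid index into $[p_1\ell,p_2\ell]$. Your write-up merely makes explicit the contradiction and the end-gap bookkeeping that the paper leaves implicit.
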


\begin{proof}
In the trivial case, when the weight of $\bx$ satisfies the constraint, i.e. ${\rm wt}(\bx) \in [p_1\ell,p_2\ell]$, then we can select $t=0 \in {\rm S}_{(k,\ell)}$. Otherwise, assume that ${\rm wt}(\bx) \notin [p_1\ell,p_2\ell]$, and without loss of generality, assume that $ {\rm wt}(\bx) < p_1\ell \le \ell/2$. Since ${\rm wt}(\bx) < \ell/2$, we have ${\rm wt}({\bf f}_{\ell}(\bx)) > \ell/2$. Now, for $k=(p_2-p_1)\ell$, consider the list of indices, $t_1=k, t_2=2k,$ and $t_i=ik \in {\rm S}_{(k, n)}$. Since  ${\bf f}_{t_i}(\bx)$ and  ${\bf f}_{t_{i+1}}(\bx)$ differ at most $k$ positions, and ${\rm wt}(\bx) < \ell/2$, ${\rm wt}({\bf f}_{\ell}(\bx)) > \ell/2$, there must be an index $t \in {\rm S}_{(k,\ell)}$ such that $ p_1\ell \le{\rm wt}({\bf f}_t(\bx)) \le p_2\ell$.
\end{proof}
\begin{example}
Let $\bx=110000000000 \in \{0,1\}^{12}, {\rm wt}(\bx)=2$. Let $p_1=1/3$ and $p_2=2/3$, i.e. we want a codeword that has weight in $[4,6]$. We compute $k=(p_2-p_1)\ell=4$. The set ${\rm S}_{(k,\ell)}=\{0,4,8,12\}$. We can verify that
\begin{align*}
{\bf f}_4(\bx)&=001100000000, \\
{\bf f}_8(\bx)&=001111110000, \\
{\bf f}_{12}(\bx)&=001111111111. 
\end{align*} 
Hence, for $t=8 \in {\rm S}_{(k,\ell)}$, we get ${\rm wt}({\bf f}_t(\bx))\in[p_1\ell,p_2\ell]$.
\end{example}
\begin{lemma}\label{extend}
Given $n>0$ and $0\le p_1 <1/2 <p_2\le 1$. Let $\bx\in\{0,1\}^n$ such that $ p_1 n \le {\rm wt}(\bx)\le p_2 n$. For any binary balance word $\by\in\{0,1\}^m$, we have $ p_1(n+m)\le{\rm wt}(\bx\by) \le p_2(n+m)$.
\end{lemma}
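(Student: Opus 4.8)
The plan is to observe that a balanced word $\by\in\{0,1\}^m$ has weight exactly $m/2$, and then bound the fraction ${\rm wt}(\bx\by)/(n+m)$ by a weighted average of the fractions ${\rm wt}(\bx)/n$ and ${\rm wt}(\by)/m = 1/2$. Concretely, since ${\rm wt}(\bx\by) = {\rm wt}(\bx) + {\rm wt}(\by) = {\rm wt}(\bx) + m/2$, I would write
\begin{equation*}
\frac{{\rm wt}(\bx\by)}{n+m} = \frac{n}{n+m}\cdot\frac{{\rm wt}(\bx)}{n} + \frac{m}{n+m}\cdot\frac{1}{2}.
\end{equation*}
This exhibits the fraction as a convex combination of ${\rm wt}(\bx)/n$ and $1/2$, with weights $n/(n+m)$ and $m/(n+m)$ respectively.

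Next I would invoke the hypotheses. We are given $p_1 \le {\rm wt}(\bx)/n \le p_2$, and we also know $p_1 < 1/2 < p_2$, so $1/2$ itself lies in the interval $[p_1,p_2]$. A convex combination of two numbers, both of which lie in $[p_1,p_2]$, again lies in $[p_1,p_2]$; hence
\begin{equation*}
p_1 \le \frac{{\rm wt}(\bx\by)}{n+m} \le p_2.
\end{equation*}
Multiplying through by $n+m$ gives $p_1(n+m) \le {\rm wt}(\bx\by) \le p_2(n+m)$, which is exactly the claim. (If one wants to avoid the ``convex combination'' phrasing, the same thing can be done by two direct inequalities: ${\rm wt}(\bx\by) = {\rm wt}(\bx) + m/2 \le p_2 n + p_2 m = p_2(n+m)$, using ${\rm wt}(\bx)\le p_2 n$ and $m/2 \le p_2 m$ since $p_2 > 1/2$; and symmetrically for the lower bound with $p_1 < 1/2$.)

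There is essentially no obstacle here — the only mild subtlety is remembering that the strict inequalities $p_1 < 1/2 < p_2$ are exactly what is needed to place the weight $m/2$ of the balanced word $\by$ inside the target band, so that appending $\by$ cannot push the overall weight fraction outside $[p_1,p_2]$. I would state the two-inequality version in the paper since it is self-contained and requires no appeal to convexity.
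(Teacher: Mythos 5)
Your proposal is correct and follows essentially the same route as the paper: both rest on the identity ${\rm wt}(\bx\by)={\rm wt}(\bx)+m/2$ together with the fact that $p_1 m < m/2 < p_2 m$, which is exactly your ``two-inequality version.'' The convex-combination phrasing is just a repackaging of that same addition of inequalities, so there is no substantive difference.
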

\begin{proof}
We have ${\rm wt}(\bx\by)={\rm wt}(\bx)+m/2$. Since $0\le p_1 <1/2 <p_2\le 1$, it implies that $p_1m<m/2<p_2m$.

 As given, $ p_1 n \le {\rm wt}(\bx)\le p_2 n$, we then conclude that $ p_1(n+m)\le{\rm wt}(\bx\by) \le p_2(n+m). $
\end{proof}

For constant $0\le p_1 <1/2 <p_2\le 1$, $k=(p_2-p_1)\ell$, the size of ${\rm S}_{(k,\ell)}$ is at most $\floor{1/(p_2-p_1)}+1$, which is independent of $\ell$. Let $r=\ceil{\log \left( \floor{1/(p_2-p_1)}+1 \right)}$. To encode an arbitrary binary sequence $\bx$ to a codeword in ${\cal S}(n,\ell,[a,b])$, where $a\le p_1 \ell, b\ge p_2\ell$, we divide $\bx$ into subblocks of length $N=\ell-r$. We then encode each subblock and concatenate the outputs. For each subblock, we simply find the smallest index $t$ in ${\rm S}_{(k,\ell-r)}$ such that $\by={\bf f}_t(\bx)$ satisfies the weight constraint. According to Theorem~\ref{epsilon-balanced}, such index always exists. To represent such index, we also append a short balanced suffix, and so, a lookup table of size $\log |{\rm S}_{(k,\ell-r)}|=r$ is required. 

For completeness, we describe the formal encoder/decoder of ${\cal S}(n,\ell,[a, b])$ as follows.
\vspace{0.05in}

\noindent{\bf Preparation phase.} Given $n=m\ell$, $0\le p_1 <1/2 <p_2\le 1$, set $k=(p_2-p_1)\ell$ and $r=\ceil{\log (\floor{1/(p_2-p_1)}+1)}$. Set ${\rm S}_{(k,\ell-r)}$ be the set of indices as defined in Definition~\ref{walk}. We construct a one-to-one correspondence between the indices in ${\rm S}_{(k,\ell-r)}$ and the $r$ bits balanced sequences.
\vspace{0.05in}

\noindent{\bf Encoder S}
\vspace{0.05in}

{\sc Input}: $\bx\in \{0,1\}^{m(\ell-r)}$\\
{\sc Output}: $\by = \enc_{\rm S}(\bx) \in {\cal S}(n,\ell,[a,b])$ where $a\le p_1\ell, b\ge p_2\ell$\\[-3mm]

\begin{enumerate}[(I)]
\item For $1\le i\le m$ Do:
\begin{itemize}
\item Set $\bz_i={\bf B}_{(i,\ell-r)}(\bx)$
\item Search for the first index $t$ in ${\rm S}_{(k,\ell-r)}$, such that ${\rm wt}({\bf f}_t(\bz_i)) \in [p_1(\ell-r),p_2(\ell-r)]$
\item Let $\bp_i$ be the $r$ bits balanced sequence representing $t$ 
\item Set $\by_i={\bf f}_t(\bz_i) \bp_i$
\end{itemize}

\item Finally, we output $\by=\by_1\by_2\ldots\by_m$
\end{enumerate}

\begin{theorem}
The Encoder S is correct. In other words, $\enc_{\rm S}(\bx) \in {\cal S}(n,\ell,[a,b])$ for all $\bx \in \{0,1\}^{m(\ell-r)}$.
\end{theorem}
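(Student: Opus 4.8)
The plan is to verify that each output block $\by_i$ has length exactly $\ell$ and weight in $[a,b]$, so that their concatenation lies in ${\cal S}(n,\ell,[a,b])$. First I would check the length bookkeeping: each input subblock $\bz_i = {\bf B}_{(i,\ell-r)}(\bx)$ has length $\ell-r$, the operation ${\bf f}_t$ preserves length, and appending the $r$-bit suffix $\bp_i$ brings the total to $(\ell-r)+r = \ell$; since there are $m$ blocks the output has length $m\ell = n$, as required.

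Next I would handle the weight constraint block by block. Applying Theorem~\ref{epsilon-balanced} with $\ell$ replaced by $\ell-r$ (which we may assume is even, or is handled by the preparation phase), and with $k = (p_2-p_1)(\ell-r)$ — note here one must be a little careful that the walk ${\rm S}_{(k,\ell-r)}$ used in the encoder matches the set in the theorem; I would point out that the theorem guarantees the existence of an index $t$ in that walk with ${\rm wt}({\bf f}_t(\bz_i)) \in [p_1(\ell-r), p_2(\ell-r)]$, so the search in step (I) always succeeds and the ``first'' such $t$ is well-defined. Then I would invoke Lemma~\ref{extend} with $n \leftarrow \ell-r$, $m \leftarrow r$, $\bx \leftarrow {\bf f}_t(\bz_i)$, and $\by \leftarrow \bp_i$ (a balanced word of length $r$, which exists by the one-to-one correspondence set up in the preparation phase, assuming $r$ is even) to conclude $p_1\ell \le p_1((\ell-r)+r) \le {\rm wt}(\by_i) \le p_2((\ell-r)+r) = p_2\ell$. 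Finally, since $a \le p_1\ell$ and $b \ge p_2\ell$, we get ${\rm wt}(\by_i) \in [a,b]$, and as this holds for every $i$, and ${\bf B}_{(i,\ell)}(\by) = \by_i$ by construction, we conclude $\by \in {\cal S}(n,\ell,[a,b])$.

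The main obstacle I anticipate is not conceptual but bookkeeping: reconciling the parity requirements (the encoder silently needs $\ell - r$ even for Theorem~\ref{epsilon-balanced} and $r$ even for a balanced suffix to exist), and making sure the walk parameter $k$ and its host interval in the encoder ($\ell-r$) are consistently substituted into Theorem~\ref{epsilon-balanced} and Lemma~\ref{extend}. I would state these parity assumptions explicitly (or note they can be absorbed by padding), and otherwise the proof is a short direct chain: length count, then apply Theorem~\ref{epsilon-balanced} to get a valid $t$, then apply Lemma~\ref{extend} to absorb the suffix, then use $a \le p_1\ell$, $b \ge p_2\ell$.
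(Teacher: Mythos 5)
Your proposal is correct and takes essentially the same route as the paper: the per-subblock weight bound ${\rm wt}({\bf f}_t(\bz_i))\in[p_1(\ell-r),p_2(\ell-r)]$ (guaranteed by Theorem~\ref{epsilon-balanced}) combined with Lemma~\ref{extend} to absorb the balanced $r$-bit suffix, giving ${\rm wt}(\by_i)\in[p_1\ell,p_2\ell]\subseteq[a,b]$. Your additional remarks on parity of $\ell-r$ and $r$ and on using the walk parameter $k=(p_2-p_1)(\ell-r)$ consistently are just bookkeeping that the paper leaves implicit.
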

\begin{proof}
To show $\by=\enc_{\rm S}(\bx) \in {\cal S}(n,\ell,[a,b])$, we need to verify that the weight of every subblock of $\by$ is in $[a,b]$. From Encoder S, the $i$th subblock is $\by_i={\bf f}_t(\bz_i) \bp_i$. Since ${\rm wt}({\bf f}_t(\bz_i))\in [p_1(\ell-r),p_2(\ell-r)]$ and $\bp_i$ is a balanced word of length $r$, according to Lemma~\ref{extend}, ${\rm wt}(\by_i) \in [p_1\ell, p_2\ell] \subseteq[a,b]$.
\end{proof}


\noindent{\bf Decoder S}
\vspace{0.05in}

{\sc Input}: $\by\in {\cal S}(n,\ell,[a, b])$ where $a\le p_1\ell, b\ge p_2\ell$\\
{\sc Output}: $\bx = \dec_{\rm S}(\by) \in \{0,1\}^{m(\ell-r)}$\\[-3mm]

\begin{enumerate}[(I)]
\item For $1\le i\le m$ Do:
\begin{itemize}
\item Set $\bz_i={\bf B}_{(i,\ell)}(\by)$
\item Let $\bp_i$ be the suffix of length $r$ of $\bz_i$ that corresponds to an index $t\in {\rm S}_{(k,\ell-r)}$  '
\item Obtain $\bz'_i$ by removing $\bp_i$ from $\bz_i$
\item Set $\bx_i={\bf f}_t(\bz'_i)$
\end{itemize}

\item Finally, we output $\bx=\bx_1\bx_2\ldots\bx_m$
\end{enumerate}
\vspace{0.05in}

 Alternatively, for each subblock, the index can be encoded/decoded in linear-time without the look-up table for ${\rm S}_{(k,\ell-r)}$. However, the redundancy increases from $r$ to $2r$ and the set of indices is ${\rm S}_{(k,\ell-2r)}$. Recall that $|{\rm S}_{(k,\ell-r)}|=|{\rm S}_{(k,\ell-2r)}|=r$. The modified Encoder S' can be constructed as follows. We skip the detail of the corresponding Decoder S'.
 \vspace{0.05in}

\noindent{\bf Encoder S'}. 
\vspace{0.05in}

{\sc Input}: $\bx\in \{0,1\}^{m(\ell-2r)}$\\
{\sc Output}: $\by = \enc_{\rm S}(\bx) \in {\cal S}(n,\ell,[a,b])$ where $a\le p_1\ell, b\ge p_2\ell$\\[-3mm]

\begin{enumerate}[(I)]
\item For $1\le i\le m$ Do:
\begin{itemize}
\item Set $\bz_i={\bf B}_{(i,\ell-2r)}(\bx)$
\item Search for the first index $t$ in ${\rm S}_{(k,\ell-2r)}$, such that ${\rm wt}({\bf f}_t(\bz_i)) \in [p_1(\ell-2r),p_2(\ell-2r)]$
\item Let $\Gamma=\tau_1\tau_2\ldots\tau_r$ be the binary representation of the rank of the index $t$ in ${\rm S}_{(k,\ell-2r)}$
\item Set $\bp_i=\Gamma\overline{\Gamma}$ of length $2r$, where $\overline{\Gamma}$ is the complement of $\Gamma$ and set $\by_i={\bf f}_t(\bz_i) \bp_i$
\end{itemize}
\item Finally, we output $\by=\by_1\by_2\ldots\by_m$
\end{enumerate}
\vspace{0.05in}

\noindent{\bf Analysis.} The redundancy for encoding each subblock in Encoder S (or Encoder S') is $r=\ceil{\log \left( \floor{1/(p_2-p_1)}+1 \right)}$ (or $2r$), which is independent of $\ell$. In other words, for constant $p_1,p_2$, $r=\Theta(1)$. Consequently, the total redundancy for codewords of length $n=m\ell$ is then $mr=\Theta(m)$. Therefore, this encoding method is efficient for large $\ell$ and the number of subblocks is small, compared to the length of codeword, i.e. $m=\Theta(1),\ell=\Theta(n)$. In such cases, the rate of Encoder S is $(n-mr)/n=1- mr/n \to 1$, and similarly, the rate of Encoder S' is $ 1- 2mr/n \to 1$, both approaching the channel capacity. Indeed, the same argument applies when $m=o(n)$. It is easy to verify that the complexity of Encoder/Decoder S (or S') are linear in the codeword length.  
\subsection{Extension to ${\cal S}(n,\ell,a)$}
We can modify Encoder S to construct ${\cal S}(n,\ell,a)$ or ${\cal S}(n,\ell,[a, b])$ for arbitrary $0\le a<\ell/2<b\le \ell$.

For ${\cal S}(n,\ell,[a,b])$, we let $k=b-a$ and the set $S_{(k,\ell)}$ is of size at most $\floor{\ell/(b-a)}+1$. The redundancy to encode each subblock of size $\ell$ is then $\ceil{\log (\floor{\ell/(b-a)}+1)}$. The efficiency of the encoder is high when $a=o(\ell)$ or $b=\Theta(\ell)$. In such cases, since $b-a=\Theta(\ell)$, we have $\ceil{\log (\floor{\ell/(b-a)}+1)}=\Theta(1)$. 

Particularly, for SECCs ${\cal S}(n,\ell,a)$ when $a< \ell/2$, Encoder S incurs only one redundant bit for each subblock. The simple idea is as follows. If the $i$th subblock has weight $w<a<\ell/2$, the encoder simply flips the whole subblock (or equivalently take its complement), the weight of the complement is then $w'> \ell/2> a$. The encoder appends one bit $\bp=1$ (or $0$) if the flipping action is needed (or not needed). This classic code is known as the {\em polarity bit code} \cite{book}. 

\begin{example}
Let $n=21, \ell=7, m=3$ and $a=3$. Suppose the source data is $\bx= 110000 011001 111100 \in \{0,1\}^{18}$. The encoder checks every subblock of length 6 and outputs $\bc \in \{0,1\}^{21}$ where,
\begin{equation*}
\bc=  \underbrace{001111{\color{red}{1}}}_{{\bf B}_{(1,7)}} \underbrace{011001{\color{blue}{0}}}_{{\bf B}_{(2,7)}} \underbrace{111100{\color{blue}{0}}}_{{\bf B}_{(3,7)}}.
\end{equation*}
To decode $\bc$, the decoder also checks every subblock of length 7, if the last bit is 1, it flips the prefix. We then obtain the source data $\bx$,
\begin{equation*}
\bx=  \underbrace{{\color{red}{110000}}}_{{\bf B}_{(1,6)}} \underbrace{{\color{blue}{011001}}}_{{\bf B}_{(2,6)}} \underbrace{{\color{blue}{111100}}}_{{\bf B}_{(3,6)}}.
\end{equation*} 
\end{example}
\vspace{0.05in}

\section{SWCCs and Bounded SWCCs}
In this section, we propose a simple coding scheme to construct ${\cal W}(n,\ell,[a,b])$ and ${\cal W}(n,\ell,a)$ by using the sequence replacement technique. Particularly, to construct ${\cal W}(n,\ell,[a,b])$ when $a\le p_1\ell, b\ge p_2\ell$ for some constant $0\le p_1 <1/2 <p_2\le 1$, our method incurs only one redundant bit. Since ${\cal W}(n,\ell,[a,b]) \subset {\cal S}(n,\ell,[a,b])$ for $n=m\ell$, this method also provides an efficient encoder for ${\cal S}(n,\ell,[a,b])$ with only one redundant bit. This yields a significant improvement in coding redundancy with respect to the Knuth's balancing technique described in Section II. Note that the efficiency of Encoder S is high when the number of subblocks is constant, i.e. $m=\Theta(1), \ell=\Theta(n)$, since the redundancy grows linearly with $m$. In this section, we show that there exists an efficient encoder when $m$ is a function of $n$, as long as the size of subblocks (or windows) satisfies certain constraints. 
\subsection{Maximum Information Rate}
The following result  implies that there exist such codes with size at least $2^{n-1}$ and hence, approaching the channel capacity. 
 
\begin{theorem}\label{bound} For $0\le p_1 <1/2 <p_2\le 1$, let $c=\min\{1/2-p_1, p_2-1/2\}$. For $a\le p_1\ell, b\ge p_2\ell,$ and $(1/c^2) \log_e n \le \ell \le n$, we have $|{\cal W}(n,\ell,[a,b])| \geq 2^{n-1}$.  
\end{theorem}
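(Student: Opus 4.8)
The plan is to lower-bound $|{\cal W}(n,\ell,[a,b])|$ by a probabilistic argument: pick a string $\bX \in \{0,1\}^n$ uniformly at random, show that the probability it violates the window constraint somewhere is strictly less than $1/2$, and conclude that at least $2^{n-1}$ strings satisfy it. Since $a \le p_1\ell$ and $b \ge p_2\ell$, it suffices to guarantee that every window of length $\ell$ has weight in $[p_1\ell, p_2\ell]$; this is implied by requiring that every window has weight within $c\ell$ of $\ell/2$, where $c = \min\{1/2-p_1, p_2-1/2\}$. So the bad event for window $i$ is $\bigl|\weight(\bw_{(i,\ell)}(\bX)) - \ell/2\bigr| > c\ell$.

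First I would fix a window $i$. The weight of $\bw_{(i,\ell)}(\bX)$ is $\mathrm{Binomial}(\ell, 1/2)$, so by the Chernoff/Hoeffding bound, $\Pr\bigl[|\weight(\bw_{(i,\ell)}(\bX)) - \ell/2| > c\ell\bigr] \le 2\exp(-2c^2\ell)$. (I would use the two-sided Hoeffding inequality $\Pr[|S - \ell/2| \ge t] \le 2\exp(-2t^2/\ell)$ with $t = c\ell$.) Next, apply a union bound over all $n - \ell + 1 \le n$ windows: the probability that some window is bad is at most $2n\exp(-2c^2\ell)$. The hypothesis $\ell \ge (1/c^2)\log_e n$ gives $\exp(-2c^2\ell) \le \exp(-2\log_e n) = n^{-2}$, so the failure probability is at most $2n \cdot n^{-2} = 2/n$. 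This is $< 1/2$ as soon as $n > 4$, and for the small remaining cases $n \le 4$ one checks directly (e.g. the all-alternating string works, or the bound is vacuous). Hence at least half of all $2^n$ strings lie in ${\cal W}(n,\ell,[a,b])$, i.e. $|{\cal W}(n,\ell,[a,b])| \ge 2^{n-1}$.

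The main obstacle — really the only delicate point — is making sure the exponent bookkeeping in the union bound is tight enough: I need the $2n$ prefactor to be absorbed, and the naive constant $2\exp(-2c^2\ell)$ per window with $\ell \ge (1/c^2)\log_e n$ yields exactly $2/n$, which is comfortably below $1/2$ for all $n$ of interest, so no sharper concentration inequality is needed. One should also double-check the reduction step: $\weight \in [\ell/2 - c\ell, \ell/2 + c\ell]$ indeed forces $\weight \ge \ell/2 - (1/2 - p_1)\ell = p_1\ell \ge a$ and $\weight \le \ell/2 + (p_2 - 1/2)\ell = p_2\ell \le b$, using $c \le 1/2 - p_1$ and $c \le p_2 - 1/2$. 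Since ${\cal W}(n,\ell,[a,b]) \subseteq {\cal S}(n,\ell,[a,b])$ when $n = m\ell$, this simultaneously establishes Theorem~\ref{theorem1}.
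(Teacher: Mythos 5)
Your proposal is correct and follows essentially the same route as the paper's proof: a uniformly random string, the two-sided Hoeffding bound giving $2e^{-2c^2\ell}$ per window, a union bound over at most $n$ windows yielding failure probability $2/n$, and the conclusion $|{\cal W}(n,\ell,[a,b])|\ge 2^n(1-2/n)\ge 2^{n-1}$ for $n\ge 4$. The only cosmetic difference is that you note the reduction $[p_1\ell,p_2\ell]\subseteq[a,b]$ explicitly and treat the tiny-$n$ cases by direct inspection, which the paper handles by simply requiring $n\ge 4$.
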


To prove Theorem~\ref{bound}, we require {\em Hoeffding's inequality} \cite{Hoeffding.1963}.

\begin{theorem}[Hoeffding's Inequality]
Let $Z_1, Z_2,\ldots, Z_n$ be independent bounded random variables such that $a_i\le Z_i\le b_i$ for all $i$.
Let $S_n=\sum_{i=1}^n Z_i$. For any $t>0$, we have
\begin{equation*}
P(S_n-E[S_n]\geq t) \leq e^{-{2t^2}/{\sum_{i=1}^n (b_i-a_i)^2}}.
\end{equation*}
\end{theorem}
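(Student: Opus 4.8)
The plan is to prove Hoeffding's inequality by the standard \emph{exponential moment} (Chernoff) method, reducing the tail bound to a moment-generating-function estimate for each individual variable. First I would center the variables by setting $X_i = Z_i - E[Z_i]$, so that $E[X_i]=0$ and $a_i - E[Z_i] \le X_i \le b_i - E[Z_i]$, an interval of length exactly $b_i - a_i$. Writing $S_n - E[S_n] = \sum_{i=1}^n X_i$, for any $s>0$ the event $\{S_n - E[S_n] \ge t\}$ coincides with $\{e^{s(S_n - E[S_n])} \ge e^{st}\}$, so Markov's inequality applied to the nonnegative variable $e^{s(S_n-E[S_n])}$ gives $P(S_n - E[S_n] \ge t) \le e^{-st}\, E[e^{s(S_n-E[S_n])}]$. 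Since the $Z_i$, and hence the $X_i$, are independent, the moment generating function factorizes as $E[e^{s\sum_i X_i}] = \prod_{i=1}^n E[e^{sX_i}]$.

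The heart of the argument is \emph{Hoeffding's lemma}: for a random variable $X$ with $E[X]=0$ and $\alpha \le X \le \beta$ (which forces $\alpha \le 0 \le \beta$), one has $E[e^{sX}] \le e^{s^2(\beta-\alpha)^2/8}$ for every real $s$. I would establish this by convexity: since $x \mapsto e^{sx}$ is convex, for $x\in[\alpha,\beta]$ we have $e^{sx} \le \frac{\beta - x}{\beta-\alpha}e^{s\alpha} + \frac{x-\alpha}{\beta-\alpha}e^{s\beta}$; taking expectations and using $E[X]=0$ collapses the linear term, yielding $E[e^{sX}] \le \frac{\beta}{\beta-\alpha}e^{s\alpha} - \frac{\alpha}{\beta-\alpha}e^{s\beta}$. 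Substituting $p = -\alpha/(\beta-\alpha)\in[0,1]$ and $u = s(\beta-\alpha)$, and noting $s\alpha = -pu$, the right-hand side becomes $e^{\phi(u)}$ with $\phi(u) = -pu + \ln(1-p+pe^{u})$. A short computation gives $\phi(0)=\phi'(0)=0$ and $\phi''(u) = q(1-q) \le 1/4$, where $q = pe^{u}/(1-p+pe^{u})\in[0,1]$; Taylor's theorem with Lagrange remainder then forces $\phi(u) \le u^2/8$, i.e.\ $E[e^{sX}] \le e^{s^2(\beta-\alpha)^2/8}$.

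Applying the lemma to each $X_i$ (with $\beta-\alpha = b_i - a_i$) and multiplying the bounds, I obtain $E[e^{s(S_n-E[S_n])}] \le \exp\!\big(\tfrac{s^2}{8}\sum_{i=1}^n (b_i-a_i)^2\big)$, and therefore $P(S_n - E[S_n] \ge t) \le \exp\!\big(-st + \tfrac{s^2}{8}\sum_{i=1}^n(b_i-a_i)^2\big)$. The final step is to optimize the free parameter $s>0$: the exponent is a quadratic in $s$ minimized at $s^\star = 4t/\sum_{i=1}^n(b_i-a_i)^2$, and substituting $s^\star$ collapses the exponent to $-2t^2/\sum_{i=1}^n(b_i-a_i)^2$, which is exactly the claimed bound.

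I expect the main obstacle to be Hoeffding's lemma, and within it the estimate $\phi''(u)\le 1/4$ together with the bookkeeping of the substitution $p,u$; everything else (the Chernoff step, the factorization by independence, and the concluding optimization over $s$) is routine once the per-variable moment generating function bound is in hand.
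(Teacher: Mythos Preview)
Your argument is correct and is precisely the classical route: the Chernoff--Markov exponential tilt, factorization by independence, Hoeffding's lemma via the convexity bound and the $\phi(u)=-pu+\ln(1-p+pe^{u})$ calculation with $\phi''\le 1/4$, and finally optimization over $s$. There is no gap.

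As for comparison with the paper: the paper does \emph{not} supply its own proof of this theorem. Hoeffding's inequality is stated there as a quoted, named result with a citation to Hoeffding's original 1963 article, and is then applied as a black box in the proof of the subsequent bound on $|\mathcal{W}(n,\ell,[a,b])|$. So there is no paper-side argument to compare against; your proposal simply fills in the standard textbook proof that the paper takes for granted.
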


\begin{proof}[Proof of Theorem~\ref{bound}]  
Let $\bx$ be uniformly at random selected element from $\{0,1\}^n$. A window ${\bw}_{(i,i+\ell-1)}$ of length $\ell$ of $\bx$ is said to be a {\em forbidden window} if the weight of it does not satisfy the constraint, i.e. ${\rm wt}({\bw}_{(i,i+\ell-1)}) \notin [a,b]$.
We evaluate the probability that the first window $\bw_{(0,\ell)}(\bx)$ 
is a forbidden window. Note that $[p_1\ell,p_2\ell] \subseteq [a,b]$. Applying Hoeffding's inequality we obtain:
\begin{align*}
P \left(  {\rm wt}(\bw_{(0,\ell)}) \notin [a,b]  \right) &\le P \left(  {\rm wt}(\bw_{(0,\ell)}) \notin [p_1\ell,p_2\ell]  \right) \\
&\le P \left( \left| {\rm wt}(\bw_{(0,\ell)}) - \ell/2\right| \geq c\ell\right) \\ 
&=2 P \left( {\rm wt}(\bw_{(0,\ell)}) - \ell/2 \geq c\ell\right) \\
&\leq 2 e^{-\frac{2c^2\ell^2}{\ell}}=2e^{-2c^2\ell}.
\end{align*}
The function $f(\ell)=2e^{-2c^2\ell}$ is decreasing in $\ell$. Since there are $(n-\ell+1) \le n$ windows, applying the union bound, we get
\begin{equation*}
P(\bx \notin {\cal W}(n,\ell,[a,b])) \leq n 2e^{-2c^2\ell} \le 2n e^{-2 \log_e n} = 2/n.
\end{equation*}
Therefore,
\begin{equation*}
|{\cal W}(n,\ell,[a,b])| \ge 2^n(1-2/n). 
\end{equation*}
For $n\geq 4$, we have that $1- 2/n \geq 1/2$. Therefore, $ {\cal W}(n,\ell,[a,b]) \geq 2^{n-1}$. Note that, since $\ell\le n$, we also require $n$ to be large enough such that $n\ge (1/c^2) \log_e n$. 
\end{proof}


Before we present the efficient encoder/decoder for ${\cal W}(n,\ell,[a,b])$, the following corollary is crucial to show the correctness of our algorithms. When $m=1$, by replacing $\ell$ with $(\ell-2)$ in Theorem~\ref{bound}, we obtain the following result. 
\begin{corollary}\label{coro} For $\ell\ge6$, and $\ell-2\ge (1/c^2) \log_e (\ell-2)$, we have $\big| {\cal W} \left( \ell-2,\ell-2,[p_1 (\ell-2), p_2 (\ell-2)] \right) \big| \geq 2^{\ell-3}$.
\end{corollary}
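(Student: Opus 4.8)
The plan is to derive Corollary~\ref{coro} as a direct specialization of Theorem~\ref{bound}, since the corollary statement is essentially Theorem~\ref{bound} with the single block $m=1$ and the block length renamed. First I would observe that in Theorem~\ref{bound} the length parameter $n$ and the window parameter $\ell$ are free subject only to $(1/c^2)\log_e n \le \ell \le n$, and the conclusion $|{\cal W}(n,\ell,[a,b])| \ge 2^{n-1}$ holds whenever $a \le p_1\ell$, $b \ge p_2\ell$. Setting the window size equal to the whole length forces $m=1$, so the ${\cal W}$ constraint becomes a single weight constraint on the entire word.

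Concretely, I would apply Theorem~\ref{bound} with the substitution $n \mapsto \ell-2$ and $\ell \mapsto \ell-2$ (so the window length equals the codeword length), and with $a \mapsto p_1(\ell-2)$, $b \mapsto p_2(\ell-2)$, which trivially satisfy $a \le p_1(\ell-2)$ and $b \ge p_2(\ell-2)$ with equality. The hypothesis of Theorem~\ref{bound} then reads $(1/c^2)\log_e(\ell-2) \le \ell-2 \le \ell-2$; the upper inequality is automatic and the lower inequality is exactly the assumption $\ell-2 \ge (1/c^2)\log_e(\ell-2)$ placed in the corollary. The conclusion of Theorem~\ref{bound} becomes $\big|{\cal W}(\ell-2,\ell-2,[p_1(\ell-2),p_2(\ell-2)])\big| \ge 2^{(\ell-2)-1} = 2^{\ell-3}$, which is precisely the desired bound.

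The only remaining point is the side condition $n \ge 4$ used at the end of the proof of Theorem~\ref{bound} to guarantee $1-2/n \ge 1/2$; under the substitution this becomes $\ell-2 \ge 4$, i.e. $\ell \ge 6$, which is exactly the hypothesis $\ell \ge 6$ stated in the corollary. So I would simply note that all hypotheses of Theorem~\ref{bound} are met by the chosen parameters and invoke it. There is no real obstacle here — the content is entirely in Theorem~\ref{bound}, and the corollary is a bookkeeping step whose sole purpose is to package the $m=1$ case with the specific shift $\ell \mapsto \ell-2$ that will be needed later in the sequence-replacement encoder (where two positions per window are reserved for markers). The proof is therefore one or two sentences.

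\begin{proof}[Proof of Corollary~\ref{coro}]
Apply Theorem~\ref{bound} with codeword length and window size both equal to $\ell-2$, and with $a = p_1(\ell-2)$, $b = p_2(\ell-2)$. The hypothesis $(1/c^2)\log_e(\ell-2) \le \ell-2$ is the assumed bound, the inequality $\ell-2 \le \ell-2$ is trivial, and $\ell \ge 6$ ensures $\ell-2 \ge 4$ as required in the proof of Theorem~\ref{bound}. Hence $\big|{\cal W}(\ell-2,\ell-2,[p_1(\ell-2),p_2(\ell-2)])\big| \ge 2^{(\ell-2)-1} = 2^{\ell-3}$.
\end{proof}
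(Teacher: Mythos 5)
Your proposal is correct and matches the paper's own argument: the corollary is obtained by specializing Theorem~\ref{bound} with codeword length and window size both set to $\ell-2$ (the paper's remark that Corollary~\ref{coro} follows ``by replacing $\ell$ with $(\ell-2)$'' when $m=1$). Your additional check that $\ell\ge 6$ supplies the $n\ge 4$ condition used at the end of the proof of Theorem~\ref{bound} is a correct and welcome bit of bookkeeping.
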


\subsection{Sequence Replacement Technique}
We first present an efficient encoder for ${\cal W}(n,\ell,[a,b])$ when there exist constant numbers $p_1,p_2$, $0\le p_1 <1/2 <p_2\le 1$ that $a\le p_1\ell, b\ge p_2\ell$. For simplicity, we construct an efficient map that translates arbitrary messages into codewords in ${\cal W}(n,\ell,[p_1\ell,p_2\ell]) \subseteq {\cal W}(n,\ell,[a,b])$. A similar class of SWCCs has been introduced in \cite{tt2019,gabrys2018}. Formally, such codes impose the weight constraint over every window of size at least $\ell$, and here we refer such codes as {\em strictly constrained SWCCs}. Some lower bounds on the size of codes are provided for specific value of $p_1$ and $p_2$ (for example, \cite{gabrys2018} considered $p_1=1/6$ and $p_2=5/6$).

Our method is based on the {\em sequence replacement technique}. The sequence replacement technique has been widely used in the literature \cite{immink2018, W2010, schoeny2017, O2019}. It is an efficient method for removing forbidden windows from a source word. In general, the encoder removes the forbidden windows and subsequently inserts its representation (which also includes the position of the windows) at predefined positions in the sequence. Crucial to the replacement step is to estimate the total number of forbidden windows. 

In the following of the section, for a binary sequence $\bx$, a window of size $\ell$ of $\bx$ is said to be an  {\em $\ell$-forbidden window} if the weight of this window does not belong to $[p_1\ell,p_2\ell]$. Let ${\bf F}(\ell,[p_1\ell,p_2\ell])$ denote the set of all $\ell$-forbidden windows of size $\ell$. The following theorem provides an upper bound on the size of ${\bf F}(\ell,[p_1\ell,p_2\ell])$.

\begin{theorem}\label{crucial1} For $0\le p_1 <1/2 <p_2\le 1$, let $c=\min\{1/2-p_1, p_2-1/2\}$. For $n\ge16$ and $\ell\le n$ such that $(1/c^2) \log_e n \le \ell$, let 
$k=\ell-3-\log n$, there exists an one-to-one map $\Phi: {\bf F}(\ell,[p_1\ell,p_2\ell]) \to \{0,1\}^k$.
\end{theorem}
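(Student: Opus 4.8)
The plan is to reduce the existence of the injection $\Phi$ to a single cardinality inequality and then to establish that inequality by the same counting-plus-concentration idea used in the proof of Theorem~\ref{bound}. An injective map from a finite set $A$ into $\{0,1\}^k$ exists precisely when $|A| \le 2^k$; given such an inequality, one concrete choice of $\Phi$ is to list the elements of ${\bf F}(\ell,[p_1\ell,p_2\ell])$ in lexicographic order and send the $j$-th element to the length-$k$ binary expansion of $j-1$. Hence everything reduces to proving
\[
\big|{\bf F}(\ell,[p_1\ell,p_2\ell])\big| \;\le\; 2^{k} \;=\; 2^{\,\ell-3-\log n}.
\]

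To bound the left-hand side I would pass to probabilities: if $\bw$ is chosen uniformly at random from $\{0,1\}^\ell$, then $\big|{\bf F}(\ell,[p_1\ell,p_2\ell])\big| = 2^\ell\, P\big({\rm wt}(\bw)\notin[p_1\ell,p_2\ell]\big)$. Writing ${\rm wt}(\bw)=\sum_{i=1}^\ell Z_i$ with the $Z_i$ independent Bernoulli$(1/2)$ variables, and using $[p_1\ell,p_2\ell]\supseteq[\ell/2-c\ell,\,\ell/2+c\ell]$ (which is exactly how $c$ was defined), Hoeffding's inequality --- applied verbatim as in the proof of Theorem~\ref{bound} --- gives $P\big({\rm wt}(\bw)\notin[p_1\ell,p_2\ell]\big)\le 2e^{-2c^2\ell}$. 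Feeding in the hypothesis $(1/c^2)\log_e n\le\ell$ yields $2c^2\ell\ge 2\log_e n$, hence $P\big({\rm wt}(\bw)\notin[p_1\ell,p_2\ell]\big)\le 2e^{-2\log_e n}=2/n^2$, and therefore $\big|{\bf F}(\ell,[p_1\ell,p_2\ell])\big|\le 2^{\ell+1}/n^2$.

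It then remains to compare $2^{\ell+1}/n^2$ with $2^{k}=2^{\ell-3}/n$: the inequality $2^{\ell+1}/n^2\le 2^{\ell-3}/n$ is equivalent to $n\ge 2^{4}=16$, which is exactly the standing hypothesis, so the required cardinality bound holds and $\Phi$ exists. Along the way one also checks $k\ge 0$: since $p_1\ge0$ forces $c\le 1/2$ and hence $\ell\ge (1/c^2)\log_e n\ge 4\log_e n$, and $4\log_e n\ge 3+\log n$ for all $n\ge16$, the exponent $k=\ell-3-\log n$ is a well-defined nonnegative integer (interpreting $\log n$ with the appropriate rounding). The argument has no genuinely hard step; the only points needing care are keeping the base-$2$ logarithm in the definition of $k$ consistent with the natural logarithm produced by Hoeffding's bound, and verifying that the constant slack --- the passage from $2e^{-2c^2\ell}$ down to $2^{-3}/n$ --- is indeed delivered by $\ell\ge(1/c^2)\log_e n$ together with $n\ge16$.
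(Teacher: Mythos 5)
Your proposal is correct and follows essentially the same route as the paper's proof: bound $|{\bf F}(\ell,[p_1\ell,p_2\ell])|$ by $2^{\ell+1}/n^2$ via the Hoeffding estimate from Theorem~\ref{bound}, then check that $n\ge 16$ gives exactly the slack needed so that this count fits in $\{0,1\}^{\ell-3-\log n}$ (and that $k>0$). The only cosmetic difference is that you make the injection explicit via lexicographic indexing, which the paper leaves implicit.
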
 

\begin{proof}
We first show that $k>0$. Since $c\le 1/2$, we have $\ell\ge (1/c^2) \log_e n \ge 4 \log_e n > 2.77 \log n >3+\log n$ for $n\ge 4$. 
For an arbitrary $\bx \in \{0,1\}^{\ell}$, from the proof of Theorem~\ref{bound}, we have
\begin{equation*}
P(\bx \in {\bf F}(\ell,[p_1\ell,p_2\ell])) \leq 2e^{-2c^2\ell} \le 2/n^2.
\end{equation*}
Therefore, the size of $ {\bf F}(\ell,[p_1\ell,p_2\ell])$ is at most
\begin{equation*}
|{\bf F}(\ell,[p_1\ell,p_2\ell])| \le (2/n^2) 2^\ell = 2^{\ell+1}/n^2. 
\end{equation*}
Thus, to represent all forbidden windows in $ {\bf F}(\ell,[p_1\ell,p_2\ell])$, we need all binary sequences of length at most $k'=\log {2^{\ell+1}/n^2} = \ell+1-2 \log n \le \ell-3-\log n=k$ for all $n\ge 16$. Therefore, there exists a one-to-one map $\Phi: {\bf F}(\ell,[p_1\ell,p_2\ell]) \to \{0,1\}^k$.
\end{proof}

The key idea in the sequence replacement technique is to ensure that the replacement procedure is guaranteed to terminate. The general idea is to replace each forbidden window of length $\ell$ (if there is) with a subsequence of length shorter than $\ell$. Consequently, after each replacement step, the length of codeword is reduced, the replacement procedure is guaranteed to terminate. In our problem, in the worst case, the final replacement step occurs when the length of the current word is $\ell+1$, since after another replacement (if needed), the length of the current word becomes at most $\ell$ and we cannot proceed further. This final step is crucial to ensure that the final output codeword satisfies the weight constraint. The following result provides an upper bound for the number of sequences of length $\ell+1$ that include at least a forbidden window. 

Let ${\bf G}(\ell+1,[p_1\ell,p_2\ell])$ denote the set of all binary sequences of length $(\ell+1)$ that contain at least one forbidden window.
\begin{theorem}\label{crucial2} 
For $0\le p_1 <1/2 <p_2\le 1$, let $c=\min\{1/2-p_1, p_2-1/2\}$. For $\ell\ge 7$ and $\ell\ge (1/c^2) \log_e (\ell+1)$, we have $|{\bf G}(\ell+1,[p_1\ell,p_2\ell])|\le 2^{\ell-3}$. In addition, there exists an one-to-one map $\Psi: {\bf G}(\ell+1,[p_1\ell,p_2\ell]) \to {\cal W}(\ell-2,\ell-2,[p_1 (\ell-2), p_2 (\ell-2)])$.
\end{theorem}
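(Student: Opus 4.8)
The plan is to prove the two assertions of Theorem~\ref{crucial2} in order: first the cardinality bound $|{\bf G}(\ell+1,[p_1\ell,p_2\ell])|\le 2^{\ell-3}$, and then the existence of the injection $\Psi$. For the cardinality bound I would mimic the union-bound argument from the proof of Theorem~\ref{bound}, but now applied to a word $\bx\in\{0,1\}^{\ell+1}$ drawn uniformly at random. Such a word contains exactly two windows of size $\ell$, namely $\bw_{(1,\ell)}(\bx)$ and $\bw_{(2,\ell)}(\bx)$. Each window, taken in isolation, is a uniformly random element of $\{0,1\}^\ell$, so by the Hoeffding estimate already established, $P({\rm wt}(\bw_{(i,\ell)})\notin[p_1\ell,p_2\ell])\le 2e^{-2c^2\ell}$ for $i=1,2$. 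The hypothesis $\ell\ge (1/c^2)\log_e(\ell+1)$ gives $e^{-2c^2\ell}\le (\ell+1)^{-2}$, hence by the union bound $P(\bx\in {\bf G}(\ell+1,[p_1\ell,p_2\ell]))\le 4/(\ell+1)^2$, so $|{\bf G}(\ell+1,[p_1\ell,p_2\ell])|\le 2^{\ell+1}\cdot 4/(\ell+1)^2 = 2^{\ell+3}/(\ell+1)^2$. It then remains to check $2^{\ell+3}/(\ell+1)^2\le 2^{\ell-3}$, i.e. $(\ell+1)^2\ge 64$, i.e. $\ell\ge 7$, which is exactly the stated hypothesis.

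For the injection $\Psi$, the idea is simply a counting argument: by Corollary~\ref{coro}, under the stated hypotheses on $\ell$ we have $\big|{\cal W}(\ell-2,\ell-2,[p_1(\ell-2),p_2(\ell-2)])\big|\ge 2^{\ell-3}$, while the first part gives $|{\bf G}(\ell+1,[p_1\ell,p_2\ell])|\le 2^{\ell-3}$. Since the domain has size at most $2^{\ell-3}$ and the codomain has size at least $2^{\ell-3}$, any fixed ordering of both sets (say lexicographic) yields an injective map $\Psi$ sending the $j$th element of ${\bf G}(\ell+1,[p_1\ell,p_2\ell])$ to the $j$th element of ${\cal W}(\ell-2,\ell-2,[p_1(\ell-2),p_2(\ell-2)])$. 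I would note that one must confirm the hypotheses of Corollary~\ref{coro} are implied here: Corollary~\ref{coro} needs $\ell\ge 6$ and $\ell-2\ge (1/c^2)\log_e(\ell-2)$, and since $\ell-2 < \ell+1$ and the logarithm is increasing, $\ell\ge (1/c^2)\log_e(\ell+1) \ge (1/c^2)\log_e(\ell-2)$ does not immediately give what is needed — one actually wants $\ell-2\ge (1/c^2)\log_e(\ell-2)$, which for $\ell\ge 7$ follows from $\ell\ge (1/c^2)\log_e(\ell+1)$ together with the elementary fact that $\ell-2$ exceeds $(1/c^2)\log_e(\ell-2)$ whenever $\ell$ is at least this mild threshold (here one uses $c\le 1/2$ so $1/c^2\ge 4$, and checks the inequality at $\ell=7$ and monotonicity thereafter).

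The main obstacle I anticipate is bookkeeping with the threshold conditions rather than anything deep: making sure the slightly different constraints in Theorem~\ref{bound} (stated with $n$ and $\ell\le n$), in Corollary~\ref{coro} (stated with $\ell-2$), and in Theorem~\ref{crucial2} (stated with $\ell+1$) are mutually consistent, so that both the upper bound on $|{\bf G}|$ and the lower bound from Corollary~\ref{coro} genuinely hold simultaneously under the single hypothesis $\ell\ge 7$ and $\ell\ge(1/c^2)\log_e(\ell+1)$. Once the arithmetic on these thresholds is pinned down, the probabilistic estimate is a routine repetition of the Hoeffding computation and the construction of $\Psi$ is immediate from comparing cardinalities. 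A secondary minor point worth stating explicitly is that $\Psi$ need not be canonical or structure-preserving — only injectivity is claimed and used downstream in the sequence-replacement encoder — so there is no need to exhibit $\Psi$ concretely.
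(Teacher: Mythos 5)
Your proposal is essentially the paper's own proof: the same two-window union bound with the Hoeffding estimate giving $|{\bf G}(\ell+1,[p_1\ell,p_2\ell])|\le 2^{\ell+3}/(\ell+1)^2\le 2^{\ell-3}$ for $\ell\ge 7$, followed by the same cardinality comparison with Corollary~\ref{coro} to produce the injection $\Psi$. The hypothesis-bookkeeping point you flag is indeed glossed over in the paper as well, though your sketched patch is not literally correct (at $\ell=7$ one has $\ell-2=5<4\log_e 5$, and for instance $c$ close to $1/2$ with $\ell=10$ satisfies the theorem's hypotheses but not $\ell-2\ge(1/c^2)\log_e(\ell-2)$); this marginal edge case does not affect the main argument, which matches the paper's.
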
 

\begin{proof}
Since there are only two windows, similar to the proof of Theorem~\ref{bound} and Theorem~\ref{crucial1}, by using the union bound, and for an arbitrary $\bx \in \{0,1\}^{\ell+1}$, we have
\begin{equation*}
P(\bx \in {\bf G}(\ell+1,[p_1\ell,p_2\ell])) \leq 2\times 2\times e^{-2c^2\ell} \le 4/(\ell+1)^2.
\end{equation*}
Therefore, the size of $ {\bf G}(\ell+1,[p_1\ell,p_2\ell])$ is at most
\begin{equation*}
|{\bf G}(\ell+1,[p_1\ell,p_2\ell])| \le (4/(\ell+1)^2) 2^{\ell+1} = 2^{\ell+3}/(\ell+1)^2. 
\end{equation*}
For all $\ell\ge7$, we have $2^{\ell+3}/(\ell+1)^2\le 2^{\ell-3}$.
According to Corollary~\ref{coro}, $|{\cal W}(\ell-2,\ell-2,[p_1 (\ell-2), p_2 (\ell-2)])| \ge 2^{\ell-3}$, which implies that there exists an one-to-one map $\Psi: {\bf G}(\ell+1,[p_1\ell,p_2\ell]) \to {\cal W}(\ell-2,\ell-2,[p_1 (\ell-2), p_2 (\ell-2)])$.
\end{proof}

\subsection{Efficient Encoder/Decoder for ${\cal W}(n,\ell,[a,b])$}
We now present a linear-time algorithm to encode ${\cal W}(n,\ell,[a,b])$. For simplicity, we assume $\log n$ is an integer. 
\vspace{0.05in}

\noindent{\bf Encoding algorithm.}
The algorithm contains three phases: {\em initial phase}, {\em replacement phase} and {\em extension phase}. Particularly, the replacement phase includes {\em regular replacement} and {\em special replacement}.
\vspace{0.05in}

\noindent{\bf Initial phase.} The source sequence $\bx\in\{0,1\}^{n-1}$ is prepended with $0$, to obtain $\by=0\bx \in \{0,1\}^n$. The encoder scans $\by$ and if there is no forbidden window, it outputs $\by$. Otherwise, it proceeds to the replacement phase.
\vspace{0.05in}

\noindent{\bf Replacement phase.} The aim of this procedure is that, at the end of the replacement phase, all forbidden windows of size $\ell$ will be removed and the length of the current word is at least $\ell$. If the length of the current word is larger than $\ell+1$, the encoder proceeds to the regular replacement. On the other hand, if the length of the current word $\by$ is $(\ell+1)$, the encoder proceeds to the special replacement. 

\begin{itemize}
\item {\bf Regular replacement.} Let $\bw_{(i,\ell)}$ be the first forbidden window in $\by$, for some $1\le i\le n-\ell+1<n$. According to Theorem~\ref{crucial1}, the total number of forbidden windows of size $\ell$ is at most $2^{k}$, where $k=\ell-3-\log n$. Let $\bq_1$ be the binary representation of length $\log n$ of $i$, and $\bq_1=\Phi(\bw_{(i,\ell)})$ of length $k$. The encoder sets $\bq_{\rm regular}=\bq_1\bq_2$ and removes this forbidden window $\bw_{(i,\ell)}$ from $\by$, and then prepends $11\bq_{\rm regular}$ to $\by$. If, after this replacement, $\by$ contains no forbidden window, the encoder proceeds to the extension phase. Otherwise, the encoder repeats the replacement phase. Note that such an operation reduces the length of the sequence by one, since we remove $\ell$ bits and replace by $2+\log n+k=2+\log n+(\ell-3-\log n)=\ell-1$ (bits). Therefore, this procedure is guaranteed to terminate. 

\item {\bf Special replacement.} According to Theorem~\ref{crucial2}, the number of such words is at most $2^{\ell-3}$. The encoder sets  $\bq_{\rm special}=\Psi(\by)\in {\cal W}(\ell-2,\ell-2,[p_1 (\ell-2), p_2 (\ell-2)])$, i.e. ${\rm wt}(\bq_{\rm special})\in[p_1(\ell-2),p_2(\ell-2)]$, and then replaces all $(\ell+1)$ bits with $10\bq_{\rm special}$. After this replacement, the current word is of length $\ell$ and it does not contain any forbidden window. This is because the prefix is 10, which is balanced, the suffix $\bp_{\rm special}$ satisfies ${\rm wt}(\bq_{\rm special})\in[p_1(\ell-2),p_2(\ell-2)]$, therefore, according to Lemma 1, ${\rm wt}(10\bq_{\rm special}) \in [p_1\ell,p_2\ell]$. The encoder then proceeds to the extension phase.
\end{itemize} 


\noindent{\bf Extension phase.} If the length of the current sequence $\by$ is $n_0$ where $n_0<n$, the encoder appends a suffix of length $n_1=n-n_0$ to obtain a sequence of length $n$. Note that at the end of the replacement phase, the length of the current word is at least $\ell$. Let $\bz$ be the last window of size $\ell$ in $\by$. Suppose that $\bz=z_1z_2\ldots z_\ell$ and ${\rm wt}(\bz)\in [p_1\ell,p_2\ell]$. A simple way to create a suffix is to repeat appending $\bz$ for sufficient times until the length exceeds $n$. Let $j$ be the smallest integer such that $\bc=\by \bz^{j}$ is of length greater than $n$. The encoder outputs the prefix of length $n$ of $\bc$. We now show that $\bc \in {\cal W}(n,\ell,[p_1\ell,p_2\ell])$. Since $\by$ does not contain any forbidden window, it remains to show that there is no forbidden windows in the suffix $\bz^{j}$. It is easy to see that repeating the vector $\bz$ clearly satisfies the constraint since every window of size $\ell$ generated in this manner is a cyclic shift of the vector $\bz$, and since ${\rm wt}(\bz) \in [p_1\ell,p_2\ell]$, there is no forbidden window. 
\vspace{0.05in}

We now present an efficient algorithm to decode the source data uniquely. The decoding procedure is relatively simple as follows. 

\noindent{\bf Decoding algorithm.}
The decoder scans from left to right. If the first bit is 0, the decoder simply removes 0 and identifies the last $(n-1)$ bits are source data. On the other hand, if it starts with 11, the decoder takes the prefix of length $(\ell-1)$ and concludes that this prefix is obtained by a regular replacement. In other words, the prefix is of the form $11\bq_{\rm regular}$, $\bq_{\rm regular}=\bq_1\bq_2$ where $\bq_1$ is of length $\log n$ and $\bq_2$ is of length $k$. The decoder removes this prefix, adds the forbidden window $\bw=\Phi^{-1}(\bq_2)$ into position $i$, which takes $\bq_1$ as the binary representation. However, if it starts with 10, the decoder takes the prefix of length $\ell$ and concludes that this prefix is obtained by a special replacement. In other words, the prefix of length $\ell$ can be represented by $10\bq_{\rm special}$. The decoder replaces the prefix of length $\ell$ with the window of length $\ell+1$, $\bw=\Psi^{-1}(\bq_{\rm special})$, and then proceeds to decode from $\bw$. It terminates when the first bit is 0, and the decoder simply takes the following $(n-1)$ bits as the source data.
\vspace{0.05in}

We summary the details of our proposed encoder/decoder for ${\cal W}(n,\ell,[p_1\ell,p_2\ell])$ as follows. 
\vspace{0.05in}

\noindent{\bf Preparation.} Given $0\le p_1 <1/2 <p_2\le 1$, $c=\min\{1/2-p_1, p_2-1/2\}, n\ge16, \ell\ge7, n\ge \ell$, where $\ell-2\ge (1/c^2) \log_e \ell$. Let $k=\ell-3-\log n$, we construct two one-to-one maps: 
\begin{small}
\begin{align*}
&\Phi: {\bf F}(\ell,[p_1\ell,p_2\ell]) \to \{0,1\}^k, \text{ and } \\
&\Psi: {\bf G}(\ell+1,[p_1\ell,p_2\ell]) \to {\cal W}(\ell-2,\ell-2,[p_1 (\ell-2), p_2 (\ell-2)]).
\end{align*}
\end{small}
In other words, every forbidden window of size $\ell$ in ${\bf F}(\ell,[p_1\ell,p_2\ell])$ is represented by a $k$ bits sequence, and every window of size $\ell+1$ in ${\bf G}(\ell+1,[p_1\ell,p_2\ell])$ is represented by a $\ell$ bits sequence in ${\cal W}(\ell-2,\ell-2,[p_1 (\ell-2), p_2 (\ell-2)])$. 

\noindent{\bf Encoder W}
\vspace{0.05in}

{\sc Input}: $\bx\in \{0,1\}^{n-1}$\\
{\sc Output}: $\bc = \enc_W(\bx) \in {\cal W}(n,\ell,[p_1\ell,p_2\ell]) $\\[-2mm]
\begin{enumerate}[(I)]
\item  {\em Initial Phase.} Set $\by \leftarrow 0 \bx$
\item {\em Replacement Phase.}
 
{\bf While} (there is forbidden window in $\by$) and (the length of $\by$ is greater than $\ell+1$) {\bf Do}:
\begin{itemize}
\item Let $i$ be smallest index such that $\bw_{(i,\ell)}$ is forbidden, $1\le i\le n-\ell+1$ 
\item Let $\bq_1$ be the binary representation of length $\log n$ of $i$ and let $\bq_2=\Phi(\bw_{(i,\ell)}) \in \{0,1\}^k$ 
\item Set $\bq_{\rm regular}=\bq_1 \bq_2$ 
\item Set $\by \leftarrow \by$ removes $\bw_{(i,\ell)}$
\item Set $\by \leftarrow 1 1 \bq_{\rm regular} \by $
\end{itemize}

{\bf If} (the length of $\by$ is $(\ell+1)$) and (there is a forbidden window in $\by$) {\bf then}:
\begin{itemize}
\item Set $\bq_{\rm special}= \Psi(\by)$
\item Set $\by \leftarrow 1 0 \bq_{\rm special}$
\end{itemize}
\item {\em Extension Phase}. 
\begin{itemize}
\item Set $\bz$ be the last window of size $\ell$ in $\by$
\item Let $j$ be the smallest integer where $\bc=\by \bz^{j}$ is of length greater than $n$
\end{itemize}
\item Output the prefix of length $n$ of $\bc$
\end{enumerate}


\noindent{\bf Decoder W}.
\vspace{0.05in}

{\sc Input}: $\bc\in {\cal W}(n,\ell,[p_1\ell,p_2\ell]) $\\
{\sc Output}: $\bx = \dec_W(\bc) \in \{0,1\}^{n-1} $\\[-2mm]
\begin{enumerate}[(I)]

\item  
{\bf While} (the first bit is not 0) {\bf Do}:
\begin{itemize}
\item {\bf If} (the first two bits are 11) {\bf then}:  
\begin{enumerate}[(i)]
\item Let $11\bq_1\bq_2$ be the prefix of length $\ell-1$ of $\bc$ where $\bq_1$ is of length $\log n$ and $\bq_2$ is of length $k$
\item $\bc \leftarrow \bc$ remove the prefix 
\item Let $i$ be the index whose binary representation is $\bq_1$ 
\item Let $\bw$ be the forbidden window of size $\ell$ in ${\bf F}(\ell,[p_1\ell,p_2\ell])$, $\bw=\Phi^{-1}(\bq_2)$
\item Update $\bc$ by adding $\bw$ into $\bc$ at index $i$
\end{enumerate}
\item {\bf If} (the first two bits are 10) {\bf then}:
\begin{enumerate}[(i)]
\item Let $10\bq_{\rm special}$ be the prefix of length $\ell$ of $\bc$
\item Let $\bw \in {\bf G}(\ell+1,[p_1\ell,p_2\ell])$ such that $\bw=\Psi^{-1}(\bq_{\rm special})$
\item Set $\bc \leftarrow \bw$ 
\end{enumerate}
\end{itemize}

\item {\bf If} (the first bit is 0) {\bf then}: 
\begin{itemize}
\item Remove 0
\item Set $\bc$ be the prefix of length $n-1$
\end{itemize}
\item Output $\bc$
\end{enumerate}

\noindent{\bf Complexity Analysis.} For codewords of length $n$, it is easy to verify that Encoder W and Decoder W have linear-time complexity. Particularly, in Encoder W, the initial phase takes $\Theta(1)$ time. The total number of replacement in the replacement phase is $\Theta(n)$, and hence, the running time of replacement phase is $\Theta(n)$. The extension phase takes $\Theta(n)$ time. Therefore, the running time of Encoder W is $\Theta(n)$. Decoder W does the reverse procedure of Encoder W, and therefore, the running time is also $\Theta(n)$. Even though Encoder W offers lower redundancy than Encoder S, it suffers from more severe error propagation, i.e. during the decoding procedure, a small number of corrupted bits at the channel output might result in error propagation that could corrupt a large number of the decoded bits. On the other hand, Encoder S (Decoder S) encodes (decodes) subblocks separately and concatenates the outputs, and hence has a limited error propagation.

\subsection{Extension to ${\cal W}(n,\ell,a)$, ${\cal S}(n,\ell,[a,b])$, and ${\cal S}(n,\ell,a)$}
Encoder W can be used to construct SWCCs ${\cal W}(n,\ell,a)$ for $a<L/2$ with high efficiency. Especially, when $a\ll L$, i.e. there exist a constant $p_1$ that $a<p_1L$ for some $p_1< 1/2$, we can set $p_2=1$ and use Encoder W to construct ${\cal W}(n,\ell,[a,\ell])$.

Since ${\cal W}(n,\ell,[a,b]) \subset {\cal S}(n,\ell,[a,b])$ for $n=m\ell$, this method also provides an efficient encoder for ${\cal S}(n,\ell,[a,b])$ with only one redundant bit. This yields a significant improvement in coding redundancy with respect to the Knuth's balancing technique described in Section II. Recall that, for codewords of length $n=m\ell$, the redundancy of Encoder S is $\Theta(m)$. In contrast, the redundancy of Encoder W remains one bit for large value of $m$ as long as $\ell$ is sufficient large (refer to the preparation step in Encoder W). For example, one may set $\ell=\Theta(\log n)$ and $m=\Theta(n/\log n)$, Encoder W incurs only 1 redundant bit.

In addition, recall that ${\cal S}(n,\ell,a) \equiv {\cal S}(n,\ell,[a,b])$, therefore, for $a<\ell/2$, Encoder W can also be used to construct SECCs ${\cal S}(n,L,a)$ by setting $p_2=1$. Similarly, Encoder W can be easily modified to handle the case $a=\ell/2$.

\section{Error-Correction Codes}

In this section, we combine the previous constructions of Encoder S and Encoder W with error correction constraints. The output codewords satisfy the weight constraint and are capable of correcting multiple substitution errors. In this work, we assume that the distance between two errors is at least $\ell$. The intuition behind this assumption is that, since the energy constraint (or weight constraint) is guaranteed over every subblocks (or windows) of length $\ell$, the probability of having multiple errors in a subblock or window is small. A similar model correcting single deletion or single insertion over subblocks has been studied by Abroshan \et{} \cite{M:2017}. In this work, we impose the Hamming distance constraint and the codebooks are capable of correcting substitution errors. We first introduce the {\em Varshamov-Tenengolts (VT) codes} defined by Levenshtein \cite{le1965} to correct a single substitution. 

\begin{definition} The {\em binary VT syndrome} of a binary sequence $\bx\in\{0,1\}^n$ is defined to be
${\rm Syn}(\bx)=\sum_{i=1}^n i x_i$.
\end{definition}

For $a \in  \bbZ_{n}$, let
\begin{equation*}\label{VTsub}
{\rm L}_{a}(n)=\left\{\bx\in \{0,1\}^n: {\rm Syn}(\bx) = a \ppmod{2n}\right\}.
\end{equation*}

\begin{theorem}[Levenshtein \cite{le1965}]\label{thm:lev}
For $a \in  \bbZ_{2n}$, the code ${\rm L}_{a}(n)$ can correct a single substitution in linear time. There exists a linear-time decoding algorithm $\dec^{\rm L}_a:\{0,1\}^{n}\to {\rm L}_a(n)$ such that the following holds.
If $\bc\in {\rm L}_a(n)$ and $\by$ is the received vector with at most one substitution,
then $\dec^{\rm L}_a(\by)=\bc$.
\end{theorem}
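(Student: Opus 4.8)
The plan is to run the classical Varshamov--Tenengolts syndrome-decoding argument, adapted from deletions to substitutions. Fix $\bc\in{\rm L}_a(n)$, so that ${\rm Syn}(\bc)\equiv a\ppmod{2n}$, and let $\by$ arise from $\bc$ by at most one substitution. First I would note that if $\by$ differs from $\bc$ only in coordinate $j$, then ${\rm Syn}(\by)={\rm Syn}(\bc)+\epsilon j$, where $\epsilon=+1$ when $c_j=0,\,y_j=1$ and $\epsilon=-1$ when $c_j=1,\,y_j=0$; and ${\rm Syn}(\by)={\rm Syn}(\bc)$ when $\by=\bc$. Hence the decoder can compute $\Delta\triangleq\big({\rm Syn}(\by)-a\big)\bmod 2n\in\{0,1,\ldots,2n-1\}$ and knows that $\Delta\equiv\epsilon j\ppmod{2n}$ with $j\in\{1,\ldots,n\}$, or $\Delta=0$ if no error occurred.

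Next I would carry out the residue analysis. Because $1\le j\le n<2n$, the value $+j$ reduces to a residue in $\{1,\ldots,n\}$ and the value $-j$ to a residue in $\{n,\ldots,2n-1\}$; also $\epsilon j\equiv 0\ppmod{2n}$ is impossible for $j\in\{1,\ldots,n\}$, and $+j_1\equiv -j_2\ppmod{2n}$ with $j_1,j_2\in\{1,\ldots,n\}$ forces $j_1=j_2=n$. Consequently: $\Delta=0$ implies $\by=\bc$; $\Delta\in\{1,\ldots,n-1\}$ implies $j=\Delta$; $\Delta\in\{n+1,\ldots,2n-1\}$ implies $j=2n-\Delta$; and $\Delta=n$ implies $j=n$ (with $\epsilon$ undetermined). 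Thus in every nonzero case the decoder pins down the error location $j$ uniquely. The decoder $\dec^{\rm L}_a$ is then: compute $\Delta$; if $\Delta=0$ output $\by$; otherwise set $j$ by the above rule and output $\by$ with its $j$-th bit flipped.

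Finally I would verify correctness and running time. Correctness follows from the case analysis together with the observation that, once the error location $j$ is known, the original symbol is recovered as $c_j=1-y_j$ irrespective of $\epsilon$ --- so the sign ambiguity at $\Delta=n$ does not matter, and $\dec^{\rm L}_a(\by)=\bc$ in all cases. For the running time, ${\rm Syn}(\by)=\sum_{i=1}^{n} i\,y_i$ and its reduction modulo $2n$ are obtained with $n$ additions of $O(\log n)$-bit integers, followed by a constant amount of comparison and a single bit flip; this is linear in $n$. The only delicate point in the whole argument is the boundary residue $\Delta=n$, where the syndrome does not reveal the direction of the flip; this is precisely where one uses that, for a binary alphabet, knowing the error position already determines the correction, so no further information is needed.
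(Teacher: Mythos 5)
Your proof is correct, and it is the standard syndrome-decoding argument for this result; the paper itself gives no proof, simply citing Levenshtein. Your residue analysis modulo $2n$ is exactly right: the shifts $+j$ land in $\{1,\ldots,n\}$ and $-j$ in $\{n,\ldots,2n-1\}$, neither can be $0$, they collide only at residue $n$ where both force $j=n$, and the binary alphabet makes the sign ambiguity at $\Delta=n$ harmless since the correction is always a bit flip at the identified position --- this, together with the linear-time syndrome computation, establishes the theorem as stated.
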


In fact, Levenshtein \cite{le1965}  showed that ${\rm L}_{a}(n)$ can also correct a single deletion or single insertion. 

\subsection{Construction of SECCs with Error-Correction Capability}

In SECC ${\cal S}(n,\ell,[a,b])$ or ${\cal S}(n,\ell,a)$, each codeword contains $m=n/\ell$ subblocks of length $\ell$. We simply append the information of the syndrome of each subblock to the end of each subblock. Note that the redundant part must also satisfy the weight constraint. To do so, we propose a simple method to ensure the redundant part is balanced. The extra redundancy for each subblock is $2\log 2\ell$, and hence, the total redundancy of the encoder is $2m \log 2\ell$. For simplicity, assume that $t=\log 2\ell$ is integer. In the following, we present an efficient encoder for ${\cal S}(n,\ell,[a,b])$ that can correct $m$ substitution errors. 
For simplicity, we first present the case where $a\le p_1\ell, b\ge p_2\ell$ for some constant $0\le p_1<1/2<p_2\le \ell$. This construction can be easily modified to handle other classes of SECCs (for arbitrary parameters $a,b$ or ${\cal S}(n,\ell,a)$ where $a\le \ell/2$, refer to Subsection II-C).
\vspace{0.05in}

\noindent{\bf Preparation phase.} Given $n=m\ell$, $\ell_1=\ell-2\log 2\ell-r$, $k=(p_2-p_1)\ell_1, r=\ceil{\log (\floor{1/(p_2-p_1)}+1)}$, set ${\rm S}_{(k,\ell_1)}$ be the set of indices as defined in Definition~\ref{walk}. We construct a one-to-one correspondence between the indices in ${\rm S}_{(k,\ell_1)}$ and the $r$ bits balanced sequences. We require $\ell$ to be large enough so that $\ell_1=\ell-2\log 2\ell-r>0$. Note that $r=O(\log \ell)$.
\vspace{0.05in}

\noindent{${\textbf{Encoder S}}^{\rm ECC}$}. 

{\sc Input}: $\bx \in \{0,1\}^{m\ell_1}$\\
{\sc Output}: $\bc \triangleq \enc_{\cal S}^{\rm ECC}(\bx) \in {\cal S}(n,\ell,[a,b])$, where $a\le p_1\ell, b\ge p_2\ell$ and $n=m\ell$ \\[-2mm]

\begin{enumerate}[(I)]
\item Set $\ell_2=\ell-2\log 2\ell$. Use the Encoder S to obtain $\by = \enc_{\cal S}(\bx) \in {\cal S}(m\ell_2,\ell_2,[p_1\ell_2,p_2\ell_2])$. In other words, each subblock of length $\ell_1=\ell-2\log 2\ell-r$ in $\bx$ is encoded to a subblock of length $\ell_2=\ell-2\log 2\ell$ in $\by$ 

\item For $1\le i\le m$ Do:
\begin{itemize}
\item Set $\bz_i={\bf B}_{(i,\ell_2)}(\by)$
\item Compute $a={\rm Syn}(\bz_i) \ppmod{2\ell}$
\item Set $\bp$ be the binary representation of $a$ of length $\log 2\ell$
\item Set $\bq$ be the complement of $\bp$, i.e. $\bq=\overline{\bp}$
\item Set $\bc_i=\bz_i \bp \bq$ of length $\ell$
\end{itemize}

\item Output $\bc=\bc_1\bc_2\ldots\bc_m$
\end{enumerate}

\begin{theorem}
The ${\text{Encoder S}}^{{\rm ECC}}$ is correct. In other words, $\enc_{\cal S}^{\rm ECC}(\bx) \in {\cal S}(n,\ell,[a,b])$ and is capable of correcting at most $m$ substitution errors for all $\bx$ with the assumption that the distance between any two errors is at least $\ell$.
\end{theorem}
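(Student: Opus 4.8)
The plan is to verify two things separately: first, that the output $\bc$ lies in ${\cal S}(n,\ell,[a,b])$, and second, that $\bc$ can correct up to $m$ substitution errors under the stated separation assumption. For the membership claim, I would argue subblock by subblock. By Step~(I) and the correctness of Encoder~S (the theorem already proved above), each block $\bz_i = {\bf B}_{(i,\ell_2)}(\by)$ satisfies ${\rm wt}(\bz_i) \in [p_1\ell_2, p_2\ell_2]$. In Step~(II) we append $\bp\bq$ where $\bq = \overline{\bp}$; this suffix has length $2\log 2\ell$ and is balanced by construction (a string concatenated with its complement always has weight exactly half its length). Hence by Lemma~\ref{extend} (applied with $n \leftarrow \ell_2$, $m \leftarrow 2\log 2\ell$), ${\rm wt}(\bc_i) = {\rm wt}(\bz_i \bp\bq) \in [p_1\ell, p_2\ell] \subseteq [a,b]$, so every subblock of $\bc$ obeys the weight constraint and $\bc \in {\cal S}(n,\ell,[a,b])$.

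For the error-correction claim, the key observation is that the separation hypothesis---any two substitution errors are at least $\ell$ apart---forces each subblock $\bc_i$ (which has length exactly $\ell$) to contain at most one erroneous bit. So it suffices to show that each length-$\ell$ subblock $\bc_i = \bz_i \bp\bq$ is a codeword of a single-substitution-correcting code, and that the decoder can reconstruct $\bz_i$ from the (at most one substitution) corrupted version. I would set this up as follows: the decoder, on receiving the possibly-corrupted block, knows $\ell_2$ and reads off the purported syndrome field; but rather than trusting it blindly, the cleanest argument is that $\bc_i$ as a whole lies in a coset of a VT-type code. Concretely, by Theorem~\ref{thm:lev} the code ${\rm L}_a(\ell)$ corrects a single substitution in linear time; here $a = {\rm Syn}(\bz_i) \bmod 2\ell$ is computed in the encoder and then \emph{stored inside the block itself} via $\bp$ and its complement $\bq$. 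The decoder recomputes a candidate syndrome from the received data field, uses the (redundantly stored, hence majority-protected against the single error) field $\bp\bq$ to pin down the true value of $a$, and then runs $\dec^{\rm L}_a$ on the length-$\ell$ block---or, more simply, observes that if the single error fell in the data part $\bz_i$ it is corrected by the syndrome, and if it fell in $\bp\bq$ the data part is already clean. Either way $\bz_i$ is recovered, and then Decoder~S (the inverse of Encoder~S) recovers the original message bits of that subblock. Doing this for all $m$ subblocks recovers $\bx$.

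The step I expect to be the main obstacle---or at least the one requiring the most care---is making the syndrome-storage-and-decoding argument airtight when the single error can land anywhere in the length-$\ell$ block, including in the redundancy field $\bp\bq$ itself. One must check that the redundant encoding $\bp\bq$ with $\bq = \overline{\bp}$ lets the decoder detect and correct a single flip in the syndrome field (a flip in $\bp$ breaks the complementarity with $\bq$, flagging which of the two copies to trust, though with only two copies one actually needs the structural fact that a single error hits at most one of $\bp$, $\bq$), and that when the error is instead in $\bz_i$, running the VT decoder with the correctly-read syndrome $a$ restores $\bz_i$. A subtle point is that the decoder does not know \emph{a priori} where the error is, so one argues by cases on the error location within the block and verifies each case leads to correct recovery of $\bz_i$; the length-$\ell$ subblock partition, which is fixed and known, is what makes this case analysis local and hence valid simultaneously across all $m$ subblocks. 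Once $\bz_i$ is known exactly, applying Decoder~S is routine and recovers $\bx$ exactly, completing the proof.
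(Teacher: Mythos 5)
Your proposal is correct and takes essentially the same route as the paper: membership of each subblock follows from Lemma~\ref{extend} applied to the balanced suffix $\bp\bq$, and per-subblock single-error correction (which the separation assumption guarantees, since two positions in a length-$\ell$ block are less than $\ell$ apart) is handled exactly as in the paper's decoder by the complementarity check --- if $\bq\neq\overline{\bp}$ the error lies in the redundancy field so $\bz_i$ is clean, and if $\bq=\overline{\bp}$ the syndrome $\bp$ is reliable and $\dec^{\rm L}_a(\bz_i)$ corrects the at most one error in the data field. Your aside about ``majority-protecting'' the syndrome and deciding ``which copy to trust'' is unnecessary (the decoder never repairs $\bp\bq$, it only detects corruption there), but your case analysis already resolves this correctly, so there is no gap.
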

\begin{proof}
Let $\bc=\enc_{\cal S}^{\rm ECC}(\bx)$. We first show that $\bc \in {\cal S}(n,\ell,[a,b])$. Since $\bz_i={\bf B}_{(i,\ell_2)}(\by)$ where $\by \in {\cal S}(m\ell_2,\ell_2,[p_1\ell_2,p_2\ell_2])$, ${\rm wt}(\bz_i) \in [p_1\ell_2,p_2\ell_2]$. On the other hand, $\bp\bq$ is balanced since $\bq$ is the complement of $\bp$. According to Lemma~\ref{extend}, the $i$th subblock $\bc_i=\bz_i\bp\bq$ satisfy the weight constraint, i.e. ${\rm wt}(\bc_i) \in [p_1\ell,p_2\ell] \subseteq [a,b]$.

It remains to show that each subblock of $\bc$ can correct a substitution error. To do so, we provide an efficient decoding algorithm. Suppose that we receive a sequence $\by=\by_1\by_2\ldots\by_m$ where each subblock $\by_i$ is of length $\ell$. For $1\le i\le m$, we decode the $i$th subblock as follows. Let $\bz_i$ be the suffix of length $\ell_2=\ell-2\log 2\ell$ of $\by_i$, $\bp$ be the following $\log 2\ell$ bits, and $\bq$ be the suffix of length $\log 2\ell$.
\begin{itemize}
\item If $\bq\neq \overline{\bp}$, then we conclude that there is an error in the suffix $\bp\bq$, consequently there is no error in $\bz_i$. The decoder use Decoder S to decode $\bz_i$.
\item If $\bq \equiv \overline{\bp}$ then we conclude that there is no error in the suffix $\bp\bq$, consequently there is at most one error in $\bz_i$. We then use $\dec^{\rm L}_a(\bz_i)$ to correct $\bz_i$ where $a$ is the integer in  $\bbZ_{2\ell}$ whose binary representation is  $\bp$.
\end{itemize}

In conclusion, $\enc_{\cal S}^{\rm ECC}(\bx) \in {\cal S}(n,\ell,[a,b])$ and is capable of correcting at most $m$ substitution errors with the assumption that the distance between any two errors is at least $\ell$ for all $\bx \in \{0,1\}^{m\ell_1}$.
\end{proof}

For completeness, we describe the corresponding decoder as follows.
\vspace{0.05in}

\noindent{${\textbf{Decoder S}}^{\bf ECC}$}. 

{\sc Input}: $\by \in \{0,1\}^{m\ell}$\\
{\sc Output}: $\bx \triangleq \dec_{\cal S}^{\rm ECC}(\by) \in \{0,1\}^{m\ell_1}$ \\[-2mm]

\begin{enumerate}[(I)]

\item {\bf For} $1\le i\le m$ {\bf Do}:
\begin{itemize}
\item Set $\by_i={\bf B}_{(i,\ell)}(\by)$
\item Set $\bz_i$ be the prefix of length $\ell_2=\ell-2\log 2\ell$ of $\by_i$, $\bp$ be the following $\log 2\ell$ bits and $\bq$ be the suffix of length $\log 2\ell$, i.e. $\by_i=\bz_i \bp \bq$
\item {\bf If} ($\bq\equiv \overline{\bp}$) {\bf Do}:
\begin{enumerate}[(i)]
\item Let $a \in \bbZ_{2\ell}$ whose binary representation is  $\bp$
\item Let $\bc_i=\dec^{\rm L}_a(\bz_i)$ of length $\ell_2=\ell-2\log 2\ell$
\item Use Decoder S to obtain $\bx_i= \dec_{\cal S}(\bc_i)$ of length $\ell_1$
\end{enumerate}
\item {\bf If} ($\bq\neq \overline{\bp}$) {\bf Do}
\begin{enumerate}[(i)]
\item Let $\bc_i\equiv \bz_i$
\item Use Decoder S to obtain $\bx_i= \dec_{\cal S}(\bc_i)$ of length $\ell_1$
\end{enumerate}
\end{itemize}

\item Output $\bx=\bx_1\bx_2\ldots\bx_m \in \{0,1\}^{m\ell_1}$ 
\end{enumerate}
\vspace{0.05in}

\noindent{\bf Analysis.} Since Encoder S/Decoder S has linear-time encoding/decoding complexity and the error correction decoder for each subblock $\dec^{\rm L}_a(\bz_i)$ also has linear-time complexity, both Encoder ${\rm S}^{\rm ECC}$ and Decoder ${\rm S}^{\rm ECC}$ have linear-time complexity. The redundancy for error-correction in each subblock is $2\log 2\ell$. Consequently, the total redundancy for codewords of length $n=m\ell$ is then $m(r+2\log 2\ell)$. Recall that $r=\Theta(1)$. Therefore, this encoding method is efficient when the number of subblocks is small compared to the length of codeword, i.e. $m=\Theta(1),\ell=\Theta(n)$ or $m=o(n)$. In such cases, the rate of Encoder ${\rm S}^{\rm ECC}$ approach the channel capacity for sufficient $\ell, n$,
\begin{align*}
\lim_{n \to \infty} \frac{m(\ell-2\log 2\ell-r)}{m\ell} &= \lim_{\ell \to \infty} \frac{\ell-2\log 2\ell-r}{\ell} \\
&= \lim_{\ell \to \infty} 1 - \frac{\log 2\ell + r}{\ell} \\
&= 1.
\end{align*}



\subsection{Construction of SWCCs with Error-Correction Capability}

In order to combine Encoder W/Decoder W with error-correction capability, we need to make sure that after appending the syndrome to the end of the information data, any overlapping window of size $\ell$ between two parts does not violate the weight constraint. Specifically, suppose that $\bx=\bx_1\bx_2\ldots \bx_m \in {\cal W}(n,\ell,[p_1\ell,p_2\ell]$, where $\bx_i$ is of length $\ell$, and we append the balanced suffix $\by_i$ (representing the syndrome of $\bx_i$) to the end of $\bx_i$, any window of size $\ell$ in $\bx_i \by_i$ and $\by_i\bx_{i+1}$ must not be a forbidden window. The following result is crucial to the method of appending the syndrome in such a way that the weight constraint is preserved. 

For constant $p_1,p_2$ where $0\le p_1<1/2<p_2\le 1$, let $p_1'=1/2(p_1+1/2)$ and $p_2=1/2(p_2+1/2)$, and $\ell$ be sufficient that $\ell(1/2-p_1)\ge 2\log 2\ell+1$ and $\ell(p_2-1/2)\ge 2\log 2\ell+1$.

\begin{definition}
Given two binary sequences of same length $\bx=x_1x_2\ldots x_n$ and $\by=y_1y_2\ldots y_n$, the interleaved sequence of $\bx$ and $\by$ is defined by $\bx || \by \triangleq x_1 y_1 x_2 y_2 \ldots x_n y_n$. 
\end{definition}
For a binary sequence $\bx\in\{0,1\}^n$, recall that $\overline{\bx}$ denote the complement of $\bx$. Clearly, we get $\bx||\overline{\bx}$ is balanced.

\begin{lemma}\label{overlapping}
Given $\bx \in \{0,1\}^\ell$ such that ${\rm wt}(\bx)\in[p_1'\ell,p_2'\ell]$, and $\by\in\{0,1\}^m$ where $m\le \log 2\ell$. Set $\bz=\by||\overline{\by}$. For $1\le i\le 2\log 2\ell$, let $\bu_i$ be the suffix of length $(\ell-i)$ of $\bx$ and $\bv_i$ be the prefix of length $i$ of $\bz$. We then have ${\rm wt}(\bu_i\bv_i)\in[p_1\ell,p_2\ell]$ for $1\le i\le 2\log 2\ell$. 
\end{lemma}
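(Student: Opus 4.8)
The plan is to use additivity of weight, ${\rm wt}(\bu_i\bv_i) = {\rm wt}(\bu_i) + {\rm wt}(\bv_i)$, and to bound the two pieces separately, measuring everything against the tightened thresholds $p_1' = \tfrac12(p_1 + \tfrac12)$ and $p_2' = \tfrac12(p_2 + \tfrac12)$, which satisfy $p_1 < p_1' < \tfrac12 < p_2' < p_2$. The first piece is easy: since $\bu_i$ is the length-$(\ell-i)$ suffix of $\bx$, we have $\bx = (\text{its }i\text{-bit prefix})\,\bu_i$, so deleting $i$ bits changes the weight by at most $i$ and ${\rm wt}(\bx) - i \le {\rm wt}(\bu_i) \le {\rm wt}(\bx)$. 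The sequence $\bz = \by\,||\,\overline{\by}$ has length $2m \le 2\log 2\ell$, so it suffices to consider $1 \le i \le 2m$; this makes $\bv_i$ an honest prefix of $\bz$, keeps $\bu_i\bv_i$ of length exactly $\ell$, and (since $\ell$ is large in the stated range) keeps $\bu_i$ well defined.

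The crux is the bound on ${\rm wt}(\bv_i)$, and this is the one place where the interleaving construction actually matters. Writing $\bz = y_1\overline{y_1}\,y_2\overline{y_2}\cdots y_m\overline{y_m}$, each adjacent pair $y_t\overline{y_t}$ contributes weight exactly $1$, so every even-length prefix of $\bz$ is perfectly balanced and every odd-length prefix is balanced up to a single extra bit; hence $\lfloor i/2\rfloor \le {\rm wt}(\bv_i) \le \lceil i/2\rceil$, uniformly in $\by$. This ``local balance'' is exactly what makes the stated size conditions on $\ell$ sufficient: the crude estimate $0 \le {\rm wt}(\bv_i) \le i$ would force $(\tfrac12 - p_1)\ell \ge 4\log 2\ell$ and $(p_2 - \tfrac12)\ell \ge 4\log 2\ell$, whereas replacing the range $[0,i]$ by the radius-$\tfrac12$ interval around $i/2$ halves the requirement.

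It then remains to combine the estimates. Using ${\rm wt}(\bx) \in [p_1'\ell, p_2'\ell]$, the identity $i - \lfloor i/2\rfloor = \lceil i/2\rceil$, and $\lceil i/2\rceil \le \log 2\ell$ for $i \le 2\log 2\ell$ (with $\log 2\ell$ an integer), we obtain ${\rm wt}(\bu_i\bv_i) \ge {\rm wt}(\bx) - \lceil i/2\rceil \ge p_1'\ell - \log 2\ell$ and ${\rm wt}(\bu_i\bv_i) \le {\rm wt}(\bx) + \lceil i/2\rceil \le p_2'\ell + \log 2\ell$. Since $p_1' - p_1 = \tfrac12(\tfrac12 - p_1)$ and $p_2 - p_2' = \tfrac12(p_2 - \tfrac12)$, the hypotheses $(\tfrac12 - p_1)\ell \ge 2\log 2\ell + 1$ and $(p_2 - \tfrac12)\ell \ge 2\log 2\ell + 1$ give $p_1'\ell - \log 2\ell \ge p_1\ell$ and $p_2'\ell + \log 2\ell \le p_2\ell$ respectively, so ${\rm wt}(\bu_i\bv_i) \in [p_1\ell, p_2\ell]$ for every $1 \le i \le 2\log 2\ell$, as claimed. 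I anticipate no real obstacle beyond noticing the local-balance property in the middle paragraph; the remainder is bookkeeping with the definitions of $p_1', p_2'$ and the size bound on $\ell$.
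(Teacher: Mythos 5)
Your proof is correct, and it follows the paper's overall template (additivity of weight, bounding the suffix of $\bx$ and the prefix of $\bz$ separately against the tightened thresholds $p_1',p_2'$), but it differs at the one step that matters, and in a way that is actually sharper than the paper. For the lower bound the paper uses exactly your estimates: ${\rm wt}(\bu_i)\ge p_1'\ell-i$ and ${\rm wt}(\bv_i)\ge (i-1)/2$, yielding the requirement $(1/2-p_1)\ell\ge i+1$, which the hypothesis $(1/2-p_1)\ell\ge 2\log 2\ell+1$ covers. For the upper bound, however, the paper bounds the prefix crudely by ${\rm wt}(\bv_i)\le i$ and then needs $\tfrac12(1/2-p_2)\ell+i\le 0$, i.e.\ $i\le \tfrac12(p_2-1/2)\ell$; under the stated hypothesis $(p_2-1/2)\ell\ge 2\log 2\ell+1$ this only covers $i$ up to about $\log 2\ell$, not the full range $i\le 2\log 2\ell$ (one would need $(p_2-1/2)\ell\ge 4\log 2\ell$ for the paper's chain of inequalities to close). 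Your observation that every prefix of the interleaved word $\by||\overline{\by}$ has weight in $[\lfloor i/2\rfloor,\lceil i/2\rceil]$ — the ``local balance'' property — gives ${\rm wt}(\bu_i\bv_i)\le p_2'\ell+\lceil i/2\rceil\le p_2'\ell+\log 2\ell\le p_2\ell$ under the hypothesis as stated, so your argument repairs this looseness and makes the two sides symmetric. Your remark that one should really restrict to $1\le i\le 2m$ (since $\bz$ has only $2m\le 2\log 2\ell$ bits) is also a sensible tightening of the lemma's phrasing that the paper glosses over.
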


\begin{proof}
For $1\le i\le 2\log 2\ell$, we first show that ${\rm wt}(\bu_i\bv_i)\ge p_1\ell$. Since ${\rm wt}(\bx)\in[p_1'\ell,p_2'\ell]$, and $\bu_i$ is the suffix of length $(\ell-i)$ of $\bx$, we get ${\rm wt}(\bu_i)\ge p_1'\ell-i$. On the other hand, we observe that ${\rm wt}(\bv_i)\ge (i-1)/2$. Hence, 
\begin{align*}
{\rm wt}(\bu_i\bv_i)&\ge (p_1'\ell-i)+(i-1)/2= p_1'\ell-(i+1)/2 \\
&= 1/2(p_1+1/2)\ell-(i+1)/2\\
&\ge 1/2[\underbrace{(1/2-p_1)\ell-(i+1)}_{\ge 0}]+p_1\ell \\
&\ge p_1\ell. 
\end{align*}

Similarly, we have ${\rm wt}(\bu_i)\le p_2'\ell$, ${\rm wt}(\bv_i)\le i$, and hence, 
\begin{align*}
{\rm wt}(\bu)&\le p_2'\ell+i= 1/2(p_2+1/2)\ell+i\\
&\le \underbrace{1/2(1/2-p_2)\ell+i}_{\le 0}+p_2\ell \\
&\le p_2\ell. 
\end{align*}

In conclusion, ${\rm wt}(\bu_i\bv_i)\in[p_1\ell,p_2\ell]$ for $1\le i\le 2\log 2\ell$.
\end{proof} 

\begin{corollary}\label{overlapping-strong}
Given $\bx \in \{0,1\}^\ell$ such that ${\rm wt}(\bx)\in[p_1'\ell,p_2'\ell]$, and $\by\in\{0,1\}^m$ where $m\le \log 2\ell$. Set $\bz=\by||\overline{\by}$. Let $\bv$ be any substring of length $i$ of $\bz$. Let $\bx'=\bx_1\bx_2$ be a substring of length $\ell-i$ of $\bx$ and let $\bu=\bx_1\bv\bx_2$. We then have ${\rm wt}(\bu)\in[p_1\ell,p_2\ell]$. 
\end{corollary}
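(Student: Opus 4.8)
The plan is to reduce Corollary~\ref{overlapping-strong} to Lemma~\ref{overlapping} by observing that inserting a block $\bv$ of length $i$ into the middle of a length-$(\ell-i)$ substring of $\bx$ produces a weight that is exactly ${\rm wt}(\bx')+{\rm wt}(\bv)$, where $\bx'$ has length $\ell-i$ and $\bv$ is a substring of the balanced interleaved word $\bz=\by||\overline{\by}$. Thus the only thing that matters for the weight bounds is how small or large ${\rm wt}(\bx')$ and ${\rm wt}(\bv)$ can be, and the positions of the pieces are irrelevant. So first I would write ${\rm wt}(\bu)={\rm wt}(\bx_1)+{\rm wt}(\bv)+{\rm wt}(\bx_2)={\rm wt}(\bx')+{\rm wt}(\bv)$.

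Next I would establish the two one-sided estimates on ${\rm wt}(\bx')$ and ${\rm wt}(\bv)$ that mirror the ones used inside the proof of Lemma~\ref{overlapping}. For the lower bound: since $\bx'$ is obtained from $\bx$ by deleting $i$ coordinates and ${\rm wt}(\bx)\ge p_1'\ell$, we have ${\rm wt}(\bx')\ge p_1'\ell-i$; and since $\bv$ is a length-$i$ substring of $\bz=\by||\overline{\by}$, in which every adjacent pair $y_jy_j'$ (with $y_j'=\overline{y_j}$) contributes exactly one $1$, any length-$i$ window of $\bz$ contains at least $\lfloor i/2\rfloor\ge (i-1)/2$ ones, so ${\rm wt}(\bv)\ge (i-1)/2$. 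For the upper bound: ${\rm wt}(\bx')\le{\rm wt}(\bx)\le p_2'\ell$ and ${\rm wt}(\bv)\le i$ trivially (or $\le\lceil i/2\rceil$, but $\le i$ suffices). These are precisely the inequalities fed into the chain of displayed inequalities in Lemma~\ref{overlapping}, with $i$ now allowed to range over all of $1,\dots,2\log 2\ell$, which is exactly the range covered by the hypotheses $\ell(1/2-p_1)\ge 2\log 2\ell+1$ and $\ell(p_2-1/2)\ge 2\log 2\ell+1$.

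Then the conclusion follows by reproducing the two computations in Lemma~\ref{overlapping} verbatim:
\begin{align*}
{\rm wt}(\bu) &= {\rm wt}(\bx')+{\rm wt}(\bv) \ge (p_1'\ell-i)+(i-1)/2 = \tfrac12(p_1+1/2)\ell-(i+1)/2 \\
&= \tfrac12\big[(1/2-p_1)\ell-(i+1)\big]+p_1\ell \ge p_1\ell,
\end{align*}
using $(1/2-p_1)\ell\ge 2\log 2\ell+1\ge i+1$, and symmetrically
\begin{align*}
{\rm wt}(\bu) &= {\rm wt}(\bx')+{\rm wt}(\bv) \le p_2'\ell+i = \tfrac12(p_2+1/2)\ell+i \\
&= \tfrac12\big[(1/2-p_2)\ell+2i\big]+p_2\ell \le p_2\ell,
\end{align*}
using $(p_2-1/2)\ell\ge 2\log 2\ell+1 \ge i$ (so the bracket is $\le 0$). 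Hence ${\rm wt}(\bu)\in[p_1\ell,p_2\ell]$.

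I do not expect a serious obstacle here: the statement is really just Lemma~\ref{overlapping} with the extra observation that weight is invariant under rearranging which coordinates of $\bx$ sit to the left versus the right of the inserted block $\bv$, and that $\bv$ being an arbitrary substring of $\bz$ (rather than a prefix) changes nothing in the counting argument for ${\rm wt}(\bv)$. The only point requiring a little care is making the lower bound ${\rm wt}(\bv)\ge (i-1)/2$ cover the odd-$i$ case cleanly; since consecutive pairs in $\bz$ each carry exactly one $1$, a window of length $i$ always meets at least $\lfloor i/2\rfloor$ complete such pairs, giving $\ge\lfloor i/2\rfloor\ge (i-1)/2$, which is all that is needed. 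A minor housekeeping remark is that the corollary as stated tacitly assumes $i\le\ell$ (so that $\bx'$ of length $\ell-i$ and the window $\bv$ of length $i$ make sense), and indeed $i\le 2\log 2\ell\le\ell$ under the standing assumption on $\ell$; I would note this in passing.
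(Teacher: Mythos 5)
Your route is the same as the paper's: its proof of Corollary~\ref{overlapping-strong} consists exactly of the two estimates ${\rm wt}(\bu)\ge(p_1'\ell-i)+(i-1)/2\ge p_1\ell$ and ${\rm wt}(\bu)\le p_2'\ell+i\le p_2\ell$, justified as ``similar to the proof of Lemma~\ref{overlapping}'', and your additivity observation ${\rm wt}(\bu)={\rm wt}(\bx')+{\rm wt}(\bv)$ with ${\rm wt}(\bx')\in[p_1'\ell-i,\;p_2'\ell]$ is precisely how those estimates are meant to be obtained.

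Two details in your write-up do not hold as stated, however. First, the inference in your upper bound is off by a factor of two: your bracket is $(1/2-p_2)\ell+2i$, so you need $(p_2-1/2)\ell\ge 2i$, whereas the hypothesis you invoke, $(p_2-1/2)\ell\ge 2\log 2\ell+1\ge i$, only gives $(p_2-1/2)\ell\ge i$; for $i$ near $2\log 2\ell$ the crude bound ${\rm wt}(\bv)\le i$ therefore does \emph{not} suffice, contrary to your parenthetical remark. The repair is the interleaving structure itself: each complete pair $y_j\overline{y_j}$ contributes exactly one $1$, so ${\rm wt}(\bv)\le \lceil i/2\rceil+1$, which turns the bracket into $(1/2-p_2)\ell+(i+3)$ at worst and is then absorbed by the standing assumption, up to the same unit-level slack the paper itself tolerates (note that the paper's own display in Lemma~\ref{overlapping} also uses the loose ${\rm wt}(\bv_i)\le i$ and shares this looseness). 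Second, your claim that any length-$i$ window of $\bz=\by||\overline{\by}$ contains at least $\lfloor i/2\rfloor$ ones is true for prefixes (the situation of Lemma~\ref{overlapping}) but not for arbitrary substrings: a window starting in the middle of a pair, e.g.\ $\overline{y_1}\,y_2$ with $y_1=1$, $y_2=0$, meets only $\lfloor(i-1)/2\rfloor$ complete pairs, so the guaranteed weight is $\lfloor(i-1)/2\rfloor\ge(i-2)/2$. The lower-bound chain still goes through with this half-unit loss under the assumption $(1/2-p_1)\ell\ge 2\log 2\ell+1$, again at the paper's level of precision, but the counting should be stated for substrings, not prefixes. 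With these two adjustments your argument coincides with the paper's.
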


\begin{proof}
Similar to the proof of Lemma~\ref{overlapping}, we can show that 
\begin{align*}
{\rm wt}(\bu) &\ge (p_1'\ell-i)+(i-1)/2 \ge p_1\ell, \text{ and } \\
{\rm wt}(\bu) &\le p_2'\ell+i \ge p_2\ell. 
\end{align*} 
Therefore, ${\rm wt}(\bu)\in[p_1\ell,p_2\ell]$.
\end{proof}

\begin{corollary}\label{add syndrome}
Let $\bx=\bx_1\bx_2 \in {\cal W}(2\ell, \ell, [p_1'\ell,p_2'\ell]$. Let $a={\rm Syn}(\bx_1) \ppmod{2\ell}$ and set $\bp$ be the binary representation of $a$ of length $\log 2\ell$. Let $\by=\bx_1(\bp||\overline{\bp})\bx_2$. There is no forbidden window in $\by$, in other words, $\by\in {\cal W}(2\ell+2\log 2\ell, \ell, [p_1\ell,p_2\ell])$.
\end{corollary}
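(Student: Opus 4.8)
The plan is to show directly that every length-$\ell$ window of $\by=\bx_1\,(\bp||\overline{\bp})\,\bx_2$ has weight in $[p_1\ell,p_2\ell]$, by splitting the windows by how they overlap the three pieces and reducing each case to Lemma~\ref{overlapping} or Corollary~\ref{overlapping-strong}. Write $\bz\triangleq\bp||\overline{\bp}$, a balanced word of length $2\log 2\ell$, and recall $|\bx_1|=|\bx_2|=\ell$, so $|\by|=\ell+2\log 2\ell+\ell=2\ell+2\log 2\ell$, which is the length appearing in the target class ${\cal W}(2\ell+2\log 2\ell,\ell,[p_1\ell,p_2\ell])$. Two cheap preliminary observations carry most of the bookkeeping: since $p_1'=\frac12(p_1+\frac12)$ and $p_2'=\frac12(p_2+\frac12)$ satisfy $p_1<p_1'<\frac12<p_2'<p_2$, we have $[p_1'\ell,p_2'\ell]\subseteq[p_1\ell,p_2\ell]$; and the hypothesis $\ell(\frac12-p_1)\ge 2\log 2\ell+1$ gives $2\log 2\ell<\ell(\frac12-p_1)\le\ell/2<\ell$, so no length-$\ell$ window fits entirely inside $\bz$ and any window meeting $\bz$ also meets $\bx_1$ or $\bx_2$.

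Next I would treat the two "pure" windows and then the rest. The unique length-$\ell$ window inside $\bx_1$ is $\bx_1$ itself, and likewise for $\bx_2$; since $\bx_1\bx_2\in{\cal W}(2\ell,\ell,[p_1'\ell,p_2'\ell])$, both have weight in $[p_1'\ell,p_2'\ell]\subseteq[p_1\ell,p_2\ell]$. Every other length-$\ell$ window $W$ contains a bit of $\bz$, and is then uniquely of the form $W=\bu_1\,\bv\,\bu_2$, where $\bv$ is a nonempty substring of $\bz$ of some length $i$, $\bu_1$ is a (possibly empty) suffix of $\bx_1$, $\bu_2$ a (possibly empty) prefix of $\bx_2$, and $|\bu_1|+|\bu_2|=\ell-i$. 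The key point is that $\bu_1\bu_2$ is a contiguous substring of $\bx=\bx_1\bx_2$ of length $\ell-i<\ell$, so there is a length-$\ell$ window $\bx^{\ast}$ of $\bx$ containing $\bu_1\bu_2$ as a substring: if $\bu_2$ is empty take $\bx^{\ast}=\bx_1$ (this case is exactly Lemma~\ref{overlapping}); if $\bu_1$ is empty take $\bx^{\ast}=\bx_2$; and if both are nonempty, $\bu_1\bu_2$ straddles the internal boundary of $\bx$, so one may take the window of $\bx$ starting at any position $a$ with $|\bu_2|+1\le a\le \ell-|\bu_1|+1$ (a nonempty range since $|\bu_1|+|\bu_2|<\ell$). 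In all three cases $\bx^{\ast}$ is a window of $\bx$, hence ${\rm wt}(\bx^{\ast})\in[p_1'\ell,p_2'\ell]$, and applying Corollary~\ref{overlapping-strong} with ambient sequence $\bx^{\ast}$, interleaving source $\bp$ (with $m=\log 2\ell$), substring $\bv$ of $\bz$, and the decomposition $\bu_1\bu_2$ of the removed window gives ${\rm wt}(W)={\rm wt}(\bu_1\bv\bu_2)\in[p_1\ell,p_2\ell]$.

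I expect the only genuinely delicate point to be the sub-case where both $\bu_1$ and $\bu_2$ are nonempty, i.e. $W$ swallows all of $\bz$ together with a true suffix of $\bx_1$ and a true prefix of $\bx_2$: one must first check that this case is non-vacuous — the hypothesis forces $\ell\ge 4\log 2\ell+2$, so it does occur — and then correctly identify the ambient weight-constrained block feeding Corollary~\ref{overlapping-strong}, which is neither $\bx_1$ nor $\bx_2$ but a boundary-straddling length-$\ell$ window $\bx^{\ast}$ of the original $\bx$ chosen to contain $\bu_1\bu_2$. The existence of such a window is the only nontrivial index bookkeeping, and it reduces to the inequality $|\bu_1|+|\bu_2|<\ell$. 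Once this alignment is in place, all three sub-cases become uniform instances of Corollary~\ref{overlapping-strong}, and together with the two pure windows this shows $\by$ contains no forbidden window, i.e. $\by\in{\cal W}(2\ell+2\log 2\ell,\ell,[p_1\ell,p_2\ell])$, as claimed.
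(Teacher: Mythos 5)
Your proof is correct and follows essentially the same route as the paper: classify each length-$\ell$ window of $\by=\bx_1(\bp||\overline{\bp})\bx_2$ by how it overlaps the three pieces and invoke Lemma~\ref{overlapping} and Corollary~\ref{overlapping-strong}. The only difference is that you make explicit the step the paper leaves implicit in its Case~2, namely choosing a boundary-straddling length-$\ell$ window $\bx^{*}$ of $\bx_1\bx_2$ (whose weight lies in $[p_1'\ell,p_2'\ell]$ because $\bx_1\bx_2\in{\cal W}(2\ell,\ell,[p_1'\ell,p_2'\ell])$) as the ambient block for Corollary~\ref{overlapping-strong}, which is a welcome clarification but not a different argument.
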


\begin{proof}
Consider a window of size $\ell$ of $\by$. We have three following cases. 
\begin{itemize}
\item Case 1. The window includes the suffix of length $(\ell-i)$ of $\bx_1$ and a prefix of length $i$ of $\bp||\overline{\bp}$ where $i\le 2\log 2\ell$. Clearly, it is not a forbidden window, according to Lemma~\ref{overlapping}.
\item Case 2. The window is of the form $\bu\bv\bw$ where $\bu$ is the suffix of length $i$ of $\bx_1$, $\bv\equiv \bp||\overline{\bp}$ and $\bw$ is the prefix of length $\ell-i-2\log 2\ell$ of $\bx_2$. Clearly, it is not a forbidden window, according to Corollary~\ref{overlapping-strong}.
\item Case 3. The window includes the suffix of length $i$ of $\bp||\overline{\bp}$ and a prefix of length $(\ell-i)$ of $\bx_2$. Similar to case 1, it is not a forbidden window.
\end{itemize} 
In conclusion, we have $\by\in {\cal W}(2\ell+2\log 2\ell, \ell, [p_1\ell,p_2\ell])$.
\end{proof}
In the following, we present efficient encoder/decoder for SWCCs with error-correction capability. For simplicity, we assume that $n=m\ell$. Recall that $p_1'=1/2(p_1+1/2)$ and $p_2=1/2(p_2+1/2)$, and $\ell(1/2-p_1)\ge 2\log 2\ell+1$ and $\ell(p_2-1/2)\ge 2\log 2\ell+1$.
\vspace{0.05in}

\noindent{${\textbf{Encoder W}}^{\bf ECC}$}. 

{\sc Input}: $\bx \in \{0,1\}^{n-1}$\\
{\sc Output}: $\bc \triangleq \enc_{\cal W}^{\rm ECC}(\bx) \in {\cal W}(n+2m\log 2\ell,\ell,[p_1\ell,p_2\ell])$ \\[-2mm]

\begin{enumerate}[(I)]
\item Use the Encoder W to obtain $\by = \enc_{\cal W}(\bx) \in {\cal W}(n,\ell,[p_1'\ell_2,p_2'\ell_2])$. In other words, the Encoder W is constructed based on the values of $p_1',p_2'$. Suppose that $\by=\by_1\by_2\ldots \by_m$ where $\by_i\in \{0,1\}^{\ell} \cap {\cal W}(\ell,\ell,[p_1'\ell_2,p_2'\ell_2])$ for $1\le i\le m$.

\item For $1\le i\le m$ Do:
\begin{itemize}
\item Compute $a_i={\rm Syn}(\by_i) \ppmod{2\ell}$
\item Set $\bp_i$ be the binary representation of $a_i$ of length $\log 2\ell$
\item Set $\bq_i=\bp_i || \overline{\bp}$ of length $2\log 2\ell$ and $\bq_i$ is balanced
\item Set $\bc_i=\by_i \bq_i$ of length $\ell+2\log 2\ell$
\end{itemize}

\item Output $\bc=\bc_1\bc_2\ldots\bc_m$
\end{enumerate}

\begin{theorem}
The ${\text{Encoder W}}^{{\rm ECC}}$ is correct. In other words, $\enc_{\cal W}^{\rm ECC}(\bx) \in {\cal W}(n+2m\log 2\ell,\ell,[p_1\ell,p_2\ell])$, which can correct up to $m$ substitution errors with the assumption that the distance between any two errors is at least $\ell$ for all $\bx\in \{0,1\}^{n-1}$. The redundancy of ${\text{Encoder W}}^{{\rm ECC}}$ is $1+2m\log 2\ell$ (bits). 
\end{theorem}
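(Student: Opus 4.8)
The plan is to verify the three assertions of the theorem in the order they appear: first that $\bc=\enc_{\cal W}^{\rm ECC}(\bx)$ lies in ${\cal W}(n+2m\log 2\ell,\ell,[p_1\ell,p_2\ell])$, then that the block structure lets us correct up to $m$ substitutions under the $\ge\ell$-spacing promise, and finally a one-line redundancy count. Write $\bc=\by_1\bq_1\by_2\bq_2\cdots\by_m\bq_m$, where $\by=\enc_{\cal W}(\bx)=\by_1\cdots\by_m\in{\cal W}(n,\ell,[p_1'\ell,p_2'\ell])$ is the output of Encoder W \emph{run with the parameters $p_1',p_2'$} (legal since $0\le p_1'<1/2<p_2'\le 1$ and the standing hypotheses $\ell(1/2-p_1)\ge 2\log 2\ell+1$, $\ell(p_2-1/2)\ge 2\log 2\ell+1$ force $\ell$ large enough for the preparation step), each $\by_i\in\{0,1\}^\ell$, and $\bq_i=\bp_i\|\overline{\bp_i}$ with $\bp_i$ the length-$\log 2\ell$ binary VT syndrome of $\by_i$.

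For membership, the key observation is that $|\bq_i|=2\log 2\ell<\ell\le|\by_i|$, so every length-$\ell$ window of $\bc$ is contained either in some stretch $\by_i\bq_i\by_{i+1}$ (for $i<m$) or in the trailing stretch $\by_m\bq_m$. For the former: since $\by_i\by_{i+1}$ is a length-$2\ell$ substring of $\by$, all of its length-$\ell$ windows have weight in $[p_1'\ell,p_2'\ell]$, so $\by_i\by_{i+1}\in{\cal W}(2\ell,\ell,[p_1'\ell,p_2'\ell])$; applying Corollary~\ref{add syndrome} with $\bx_1:=\by_i$, $\bx_2:=\by_{i+1}$ and $\bp:=\bp_i$ gives $\by_i\bq_i\by_{i+1}\in{\cal W}(2\ell+2\log 2\ell,\ell,[p_1\ell,p_2\ell])$, so none of these windows is forbidden. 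For the trailing stretch, windows lying inside $\by_m$ are fine because ${\rm wt}(\by_m)\in[p_1'\ell,p_2'\ell]\subseteq[p_1\ell,p_2\ell]$, and windows straddling $\by_m$ and $\bq_m$ are handled directly by Lemma~\ref{overlapping} (the case-1 part of the proof of Corollary~\ref{add syndrome}). This establishes assertion one.

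For error correction I would exhibit the decoder and argue it inverts the channel. Parse the received word as $\bc'=\bc'_1\cdots\bc'_m$ with $|\bc'_i|=\ell+2\log 2\ell$, write $\bc'_i=\by'_i\bq'_i$, and de-interleave $\bq'_i$ into two length-$\log 2\ell$ halves. Because $|\by_i|=\ell$ and $|\bq_i|<\ell$, the $\ge\ell$-spacing promise forces at most one error inside $\by_i$ and at most one inside $\bq_i$. If the two halves of $\bq'_i$ are mutually complementary, no error touched $\bq_i$, the recovered syndrome $a_i$ is exact, and one call to $\dec^{\rm L}_{a_i}$ repairs the (at most one) error in $\by'_i$; otherwise exactly one error hit $\bq_i$, the untouched half still determines the true syndrome, and $\dec^{\rm L}_{a_i}$ again repairs $\by'_i$. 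Stripping the $\bq$-fields and calling Decoder W then returns $\bx$. Finally, assertion three is immediate: Encoder W costs one redundant bit ($\{0,1\}^{n-1}\to\{0,1\}^n$), each of the $m$ blocks adds a balanced $\bq_i$ of length $2\log 2\ell$, and since $n=m\ell$ the output length is $m(\ell+2\log 2\ell)=n+2m\log 2\ell$, for redundancy $1+2m\log 2\ell$.

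The main obstacle is the error-correction step: a block $\bc_i$ has length $\ell+2\log 2\ell>\ell$, so the spacing promise does \emph{not} reduce to ``one error per block'' — one must track the genuine possibility that one error falls in $\by_i$ (necessarily at a position within the first $2\log 2\ell$ coordinates, by the distance bound) while another falls in $\bq_i$, and check that the complement-encoded syndrome still lets the VT decoder disambiguate this pattern. That is the one place where a short but real case analysis is needed rather than a direct appeal to Theorem~\ref{thm:lev} and the overlapping-window lemmas; the rest is bookkeeping.
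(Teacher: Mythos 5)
Your membership argument and redundancy count follow the paper's proof essentially verbatim: the paper also reduces the window check to the stretches $\by_i\bq_i\by_{i+1}$ and invokes Corollary~\ref{add syndrome}, and your extra attention to windows straddling $\by_m\bq_m$ (via Lemma~\ref{overlapping}) only tidies up a detail the paper leaves implicit. The genuine gap is in the error-correction step. You correctly observe that under the literal promise (pairwise distance $\ge\ell$, at most $m$ errors) a block of length $\ell+2\log 2\ell$ can receive \emph{two} errors --- one in the first $2\log 2\ell$ coordinates of $\by_i$ and one in $\bq_i$ --- but your proposed resolution of that case does not work: when the two de-interleaved halves of the received $\bq_i$ disagree in exactly one position, the receiver cannot tell whether the flipped bit lies in $\bp_i$ or in $\overline{\bp_i}$, so ``the untouched half'' is not identifiable and the recovered syndrome is ambiguous between two values differing in that bit. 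Running $\dec^{\rm L}_{a}$ with the wrong syndrome can introduce an error rather than remove one, so the very step you flag as needing ``a short but real case analysis'' is exactly the step you have not supplied, and the sketch you give for it is incorrect.

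For comparison, the paper's own proof simply asserts the dichotomy: a mismatch in the suffix is taken to mean there is \emph{no} error in $\bz_i$ (which is then handed to Decoder W uncorrected), while a match is taken to mean the suffix is clean, leaving at most one error in $\bz_i$ for Theorem~\ref{thm:lev} to fix. In effect the paper reads the assumption as ``at most one error per length-$(\ell+2\log 2\ell)$ block''; you are right that this does not literally follow from the $\ge\ell$ spacing, but a complete proof must either adopt that slightly stronger reading of the error model, as the paper implicitly does, or genuinely handle the two-errors-in-one-block pattern --- and, as explained above, the complement-encoded syndrome alone does not disambiguate it.
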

\begin{proof}
Let $\bc=\enc_{\cal W}^{\rm ECC}(\bx)$. We first show that $\bc \in {\cal W}(n+2m\log 2\ell,\ell,[p_1\ell,p_2\ell])$, in other words, there is no forbidden window in $\bc$. Since $\by_i \in {\cal W}(\ell,\ell,[p_1'\ell,p_2'\ell]) \subset {\cal W}(\ell,\ell,[p_1\ell,p_2\ell])$, we only need to show that any window of size $\ell$ in $\by_i \bp_i \bq_i \by_{i+1}$ is not a forbidden window for $1\le i\le m-1$. This follows directly from Corollary~\ref{add syndrome}.

It remains to show that $\bc$ can correct $m$ substitution errors. To do so, we provide an efficient decoding algorithm and this algorithm is similar to the case of Encoder/Decoder ${\rm S}^{\rm error}$ as discussed in the earlier section. Suppose that we receive a sequence $\bc'=\bc_1'\bc_2'\ldots\bc_m'$ where each subblock $\bc_i'$ is of length $\ell+2\log 2\ell$. For $1\le i\le m$, we decode the $i$th subblock as follows. Let $\bz_i$ be the prefix of length $\ell$ of $\bc_i'$ and $\bq_i$ be the suffix of length $2\log 2\ell$, and $\bq_i= \bp_i ||\bp_i'$
\begin{itemize}
\item If $\bp_i'\neq \overline{\bp_i}$, then we conclude that there is an error in $\bq_i$, consequently there is no error in $\bz_i$. The decoder use Decoder W to decode $\bz_i$.
\item If $\bp_i' \equiv \overline{\bp_i}$ then we conclude that there is no error in the suffix $\bq_i$, consequently there is at most one error in $\bz_i$. We then use $\dec^{\rm L}_a(\bz_i)$ to correct $\bz_i$ where $a$ is the integer in $\bbZ_{2\ell}$ whose binary representation is $\bp_i$.
\end{itemize}

In conclusion, $\enc_{\cal W}^{\rm ECC}(\bx) \in {\cal W}(n+2m\log 2\ell,\ell,[p_1\ell,p_2\ell])$ and is capable of correcting at most $m$ substitution errors with the assumption that the distance between any two errors is at least $\ell$ for all $\bx \in \{0,1\}^{n-1}$.
\end{proof}

For completeness, we describe the corresponding decoder as follows.
\vspace{0.05in}

\noindent{${\textbf{Decoder W}}^{\bf ECC}$}. 

{\sc Input}: $\by \in \{0,1\}^{n+2m\log 2\ell}$,\\
{\sc Output}: $\bx \triangleq \dec_{\cal W}^{\rm ECC}(\by) \in \{0,1\}^{n-1}$ \\[-2mm]

\begin{enumerate}[(I)]

\item {\bf For} $1\le i\le m$ {\bf Do}:
\begin{itemize}
\item Set $\by_i={\bf B}_{(i,\ell+2\log 2\ell)}(\by)$
\item Set $\bz_i$ be the prefix of length $\ell$ of $\by_i$, $\bq_i$ be the following $2\log 2\ell$ bits and $\bq_i=\bp_i || \bp_i'$
\item {\bf If} ($\bp_i'\equiv \overline{\bp_i}$) {\bf Do}:
\begin{enumerate}[(i)]
\item Let $a \in \bbZ_{2\ell}$ whose binary representation is  $\bp_i$
\item Let $\bc_i=\dec^{\rm L}_a(\bz_i)$ of length $\ell$
\end{enumerate}
\item {\bf If} ($\bp_i'\neq \overline{\bp_i}$) {\bf Do}
\begin{enumerate}[(i)]
\item Let $\bc_i\equiv \bz_i$
\end{enumerate}
\end{itemize}
\item Let $\bc=\bc_1\bc_2\ldots \bc_m \in \{0,1\}^n \cap {\cal W}(n,\ell,[p_1'\ell,p_2'\ell])$
\item Use Decoder W to obtain $\bx=\dec_{\cal W}(\bc)$ of length $n-1$

\item Output $\bx$ 
\end{enumerate}
\vspace{0.05in}

\noindent{\bf Analysis.} Since Encoder W/Decoder W has linear-time encoding/decoding complexity and the error correction decoder for each subblock also has linear-time complexity, both Encoder ${\rm W}^{\rm ECC}$ and Decoder ${\rm W}^{\rm ECC}$ have linear-time complexity. The total redundancy of Encoder W is $1+2m\log 2\ell$, which is slightly less than the redundancy of Encoder S. This encoding method is efficient when the number of subblocks is small compared to the length of codeword, i.e. $m=o(n)$. In such cases, the rate of Encoder ${\rm W}^{\rm ECC}$ approaches the channel capacity for sufficient large $\ell, n$,
\begin{align*}
\lim_{n \to \infty} \frac{n-1}{n+2m\log 2\ell} =1.
\end{align*}

\section{Conclusion}
We have presented novel and efficient encoders that translate source binary data to codewords in SECCs, SWCCs, bounded SECCs, and bounded SWCCs. Our coding methods, based on Knuth's balancing technique and sequence replacement technique, incur low redundancy and have linear-time complexity. For certain codes parameters, our methods incur only one redundant bit. We also imposed minimum distance constraint to the designed codewords for error correction capability.

\end{document}